\documentclass[oneside,british,a4paper]{amsart}
\usepackage[T1]{fontenc}
\usepackage{textcomp}
\usepackage[utf8]{inputenc}
\usepackage{parskip}
\usepackage{babel}
\usepackage{verbatim}
\usepackage{prettyref}
\usepackage{mathtools}
\usepackage{amstext}
\usepackage{amsthm}
\usepackage{amssymb}
\usepackage[a4paper]{geometry}
\usepackage[pdfusetitle,
 bookmarks=true,bookmarksnumbered=false,bookmarksopen=false,
 breaklinks=false,pdfborder={0 0 1},backref=false,colorlinks=false]
 {hyperref}

\makeatletter
\numberwithin{equation}{section}
\numberwithin{figure}{section}

\@ifundefined{date}{}{\date{}}
\usepackage{prettyref}
\usepackage{enumerate}
\allowdisplaybreaks

\newcommand*{\e}{\mathrm{e}}
\renewcommand{\i}{\mathrm{i}}
\newcommand*{\ii}{\mathrm{i}}

\newcommand{\R}{\mathbb{R}}

\newcommand{\C}{\mathbb{C}}
\newcommand{\dom}{\operatorname{dom}}

\renewcommand{\d}{\,\mathrm{d}}

\newcommand{\diag}{\operatorname{diag}}

\newcommand{\trace}{\operatorname{trace}}

\newcommand{\sinc}{\operatorname{sinc}}
\newcommand{\conj}{\operatorname{conj}}

\newrefformat{subsec}{Subsection \ref{#1}}
\newrefformat{prob}{Problem \ref{#1}}
\newrefformat{prop}{Proposition \ref{#1}}
\newrefformat{lem}{Lemma \ref{#1}}
\newrefformat{thm}{Theorem \ref{#1}}
\newrefformat{cor}{Corollary \ref{#1}}
\newrefformat{rem}{Remark \ref{#1}}
\newrefformat{exa}{Example \ref{#1}}
\newrefformat{sub}{Subsection \ref{#1}}
\newrefformat{eq}{(\ref{#1})}

\theoremstyle{definition}

\newrefformat{hyp}{Hypotheses \ref{#1}}

\makeatother

\theoremstyle{plain}
\newtheorem{thm}{\protect\theoremname}[section]
\theoremstyle{remark}
\newtheorem{rem}[thm]{\protect\remarkname}
\theoremstyle{definition}
\newtheorem*{defn*}{\protect\definitionname}
\theoremstyle{plain}
\newtheorem{prop}[thm]{\protect\propositionname}
\newtheorem{lem}[thm]{\protect\lemmaname}
\newtheorem{cor}[thm]{\protect\corollaryname}
\providecommand{\corollaryname}{Corollary}
\providecommand{\definitionname}{Definition}
\providecommand{\lemmaname}{Lemma}
\providecommand{\propositionname}{Proposition}
\providecommand{\remarkname}{Remark}
\providecommand{\theoremname}{Theorem}

\begin{document}
\title{An Elementary Perspective on the Dirac Equation and its Closest Relatives}

\author[R.~Picard]{Rainer Picard}
\address[R.P.]{TU Dresden\\
Fakult\"at f\"ur Mathematik \\
  Institut für Analysis\\
Dresden \\  Germany}
  \email{rainer.picard@magenta.de}

\author[S.~Trostorff]{Sascha Trostorff}
\address[S.T.]{CAU Kiel \\
Mathematisches Seminar \\
Arbeitsbereich Analysis \\
Kiel \\
  Germany}
\email{trostorff@math.uni-kiel.de}

\author[M.~Waurick]{Marcus Waurick}
\address[M.W.]{TU Bergakademie Freiberg \\
Fakult\"at f\"ur Mathematik und Informatik \\
  Institut f\"ur Angewandte Analysis 
\\ Freiberg \\
  Germany}
\email{marcus.waurick@math.tu-freiberg.de}

\begin{abstract} We consider the Dirac equation in a real setting to show
its close relationship to Maxwell's equation. In addition to analysing
the classical Dirac mass term, we also inspect the alternative Majorana
mass term and discuss their quaternionic structure and its consequences.
\end{abstract}

\maketitle

\subjclass{\textbf{MSC 2020}} {Primary 35Q61, Secondary 35Q41, 81V99, 46S05, 47S05, 15B33}

\dedicatory{}

\keywords{\textbf{Keywords} Dirac operator, extended Maxwell system, Majorana particle, quaternions}

\setcounter{section}{-1}

\section{Introduction}

When Paul Dirac, \cite{Dirac1928QTElectron,Dirac1928QuantumTheoryElectronPartII},
found his factorisation of the Klein--Gordon operator leading to
the celebrated Dirac equation, it was generally taken for granted
that this is the smallest factorisation of the Klein--Gordon operator.
However, Majorana's 1937, \cite{Majorana1937TeoriaSimmetrica}, alternative
proposal showed that actually there is a factorisation of the Klein--Gordon
operator of half the size (which seemingly Majorana did not fully
realise). This rather spectacular finding was largely dismissed as
irrelevant, since by this time the development of quantum mechanics
based on Dirac's approach had already been developed too far to be
questioned easily. In particular, it was held that complex linearity
is of the essence in this setting, whereas the Majorana operator would
reduce to real coefficients. Still, from a mathematical perspective
this is an interesting fact, which we want to explore more deeply
in this expository note. The curious connection between the Dirac
equation and the extended Maxwell system is another ingredient, which
fosters the idea that maybe a completely real approach might lead
to some new insights. After all, if linear isometric transformations
are thought to yield just another description of ``the same physics'',
why should we for a particular area of physics only allow such transformations,
if they additionally are $\mathbb{C}$-linear. Even if we have complex
elements in a Hilbert space there is no need to consider it as a complex
Hilbert space (multiplication by $\ii$ is then just an important
skew-selfadjoint operator). We can simply take the real part of the
sesquilinear complex inner product to obtain a real inner product.
In this view we have that
\begin{align*}
\mathbb{C} & \to\mathbb{R}^{2}\\
x+\ii y & \mapsto\left(\begin{array}{c}
x\\
y
\end{array}\right)
\end{align*}
is a (real) unitary mapping. 

For example, let us take
\[
\ii\:\left(\begin{array}{cc}
0 & -\partial\\
\partial & 0
\end{array}\right)
\]
as an operator describing some imagined physical situation via some
equation in $L^{2}\left(\mathbb{R},\mathbb{C}^{2}\right)$, $\partial$
denoting differentiation. Separating real and imaginary part this
amounts to considering
\[
\left(\begin{array}{cc}
\left(\begin{array}{cc}
0 & 0\\
0 & 0
\end{array}\right) & \left(\begin{array}{cc}
0 & \partial\\
-\partial & 0
\end{array}\right)\\
\left(\begin{array}{cc}
0 & -\partial\\
\partial & 0
\end{array}\right) & \left(\begin{array}{cc}
0 & 0\\
0 & 0
\end{array}\right)
\end{array}\right)
\]
 in $L^{2}\left(\mathbb{R},\mathbb{R}^{4}\right).$ Note that the
complex structure is now hidden in the fact that
\[
\left(\begin{array}{cc}
\left(\begin{array}{cc}
0 & 0\\
0 & 0
\end{array}\right) & \left(\begin{array}{cc}
0 & \partial\\
-\partial & 0
\end{array}\right)\\
\left(\begin{array}{cc}
0 & -\partial\\
\partial & 0
\end{array}\right) & \left(\begin{array}{cc}
0 & 0\\
0 & 0
\end{array}\right)
\end{array}\right)\text{ commutes with }\left(\begin{array}{cc}
\left(\begin{array}{cc}
0 & -1\\
1 & 0
\end{array}\right) & \left(\begin{array}{cc}
0 & 0\\
0 & 0
\end{array}\right)\\
\left(\begin{array}{cc}
0 & 0\\
0 & 0
\end{array}\right) & \left(\begin{array}{cc}
0 & -1\\
1 & 0
\end{array}\right)
\end{array}\right).
\]
The real isometry $U=\left(\begin{array}{cc}
\left(\begin{array}{cc}
1 & 0\\
0 & 1
\end{array}\right) & \left(\begin{array}{cc}
0 & 0\\
0 & 0
\end{array}\right)\\
\left(\begin{array}{cc}
0 & 0\\
0 & 0
\end{array}\right) & R
\end{array}\right)$ ($R$ orthogonal matrix) transforms this to an operator
\[
U\left(\begin{array}{cc}
\left(\begin{array}{cc}
0 & 0\\
0 & 0
\end{array}\right) & \left(\begin{array}{cc}
0 & \partial\\
-\partial & 0
\end{array}\right)\\
\left(\begin{array}{cc}
0 & -\partial\\
\partial & 0
\end{array}\right) & \left(\begin{array}{cc}
0 & 0\\
0 & 0
\end{array}\right)
\end{array}\right)U^{*}=\left(\begin{array}{cc}
\left(\begin{array}{cc}
0 & 0\\
0 & 0
\end{array}\right) & \left(\begin{array}{cc}
0 & \partial\\
-\partial & 0
\end{array}\right)R^{*}\\
R\left(\begin{array}{cc}
0 & -\partial\\
\partial & 0
\end{array}\right) & \left(\begin{array}{cc}
0 & 0\\
0 & 0
\end{array}\right)
\end{array}\right),
\]
which is in general not commuting with $\left(\begin{array}{cc}
\left(\begin{array}{cc}
0 & -1\\
1 & 0
\end{array}\right) & \left(\begin{array}{cc}
0 & 0\\
0 & 0
\end{array}\right)\\
\left(\begin{array}{cc}
0 & 0\\
0 & 0
\end{array}\right) & \left(\begin{array}{cc}
0 & -1\\
1 & 0
\end{array}\right)
\end{array}\right)$ anymore. Indeed, we would need to have $R=\left(\begin{array}{cc}
1 & 0\\
0 & 1
\end{array}\right)$ to maintain commutativity with $\left(\begin{array}{cc}
\left(\begin{array}{cc}
0 & -1\\
1 & 0
\end{array}\right) & \left(\begin{array}{cc}
0 & 0\\
0 & 0
\end{array}\right)\\
\left(\begin{array}{cc}
0 & 0\\
0 & 0
\end{array}\right) & \left(\begin{array}{cc}
0 & -1\\
1 & 0
\end{array}\right)
\end{array}\right)$.

So, looking at the situation from a completely real perspective will
be more flexible and this will indeed also be the starting point of
our exposition. As a convenient tool of presentation, we will make
extensive use of the intuitive concept of block matrices and block
operator matrices. This will in particular assist to avoid tedious
calculations with special bases. For a more in-depth theoretical framework,
we refer, for example to \cite{Picard2009,PicardMcGhee2011,Primer}.
To streamline the presentation further we have done away with the
issue of boundary conditions by simply taking $\mathbb{R}^{3}$ as
the underlying spatial domain and sticking with vector analytic language
(rather than differential forms). Boundary conditions are a separate
issue, which can be dealt with in a canonical way via inspecting associated
deficiency spaces (starting with the minimal operator, compare e.g.~\cite{PT2024}). 

Much of this is surely not really new. However, it is sadly not at
the forefront of discussion, but rather buried under layers of rather
complex constructions, which may easily serve as a possible distraction
from more basic and more accessible considerations.

Our plan is to start with the classical Heaviside--Gibbs formulation
of Maxwell's equations. Then we proceed to an extended Maxwell system,
which shows quaternionic features and may be considered as the Maxwell
system that Maxwell might have intended. From this we recover the
massless Dirac equation, here deliberately written in (real) vector
analytic language to keep matters accessible\footnote{Transcription in terms of exterior calculus of differential forms
is surely a preferable way to pursue this deeper. That the Dirac system
is actually a system acting on differential forms or alternating tensors,
is a long standing suspicion and was already discussed early on, see
\cite{IwanenkoLandau1928,Kaehler1961DiracGleichung,Kaehler1962InnererDifferentialkalkuelHamburg}.}. To also include a non-trivial mass term some basic algebraic ideas
are needed and provided in section \eqref{sec:Hilbert-Algebras-of}.
The mass term proposed by Dirac also is quaternionic in nature. We
then choose to consider the Weyl picture of the Dirac\footnote{\label{fn:quaternionic}The Dirac operator is in itself of quaternionic
operator structure in as much as its spatial differential operator
part is of the block operator form $\left(\begin{array}{cc}
A & -B^{*}\\
B & A^{*}
\end{array}\right)$ with commuting (skew-selfadjoint) entries.} equation, which shows the phenomenon of decoupling for vanishing
mass. The mass term proposed by Majorana goes a step further by showing
that the full Majorana--Dirac equation decouples fully in two equal
parts. This then leads to consider just one part as a canonical suggestion
for spin $0$ system that factorizes the Klein--Gordon operator
on a real $4\times4$ matrix basis. There are of course alternatives
to obtain first order evolution equations capturing the spin $0$
case. We merely mention the Feshbach--Villars theory, \cite{FeshbachVillars1958ElementaryRelativistic},
which is real size $4\times4$ but is a mixed order (Schrödinger type)
system and the Duffin--Kemmer--Petiau theory, \cite{Kemmer1939ParticleAspectMesonTheory},
where an intricate construction for a complex $5\times5$ (i.e., $10\times10$
in real terms) system operator is considered. Based on this observation
about the Majorana\footnote{The spatial Majorana operator is also in itself of quaternionic operator
structure in the sense of footnote \ref{fn:quaternionic}.} operator, we reconstruct Majoranas idea of higher fractional spin
systems and indeed extend it to all spins.

\section{The Extended Maxwell Operator}

To begin with, let us recall the classical Gibbs--Heaviside type
Maxwell equation
\[
\left(\partial_{0}+\mathcal{A}_{\mathrm{Max}}\right)\left(\begin{array}{c}
E\\
H
\end{array}\right)=\left(\begin{array}{c}
-J\\
0_{3\times1}
\end{array}\right)
\]
with
\[
\mathcal{A}_{\mathrm{Max}}\coloneqq\left(\begin{array}{cc}
0_{3\times3} & -{\nabla}\times\\
{\nabla}\times & 0_{3\times3}
\end{array}\right),
\]
where we denote by $\nabla\times$ the classical rotation of a vector-field.
In $L^{2}\left(\mathbb{R}^{3},\mathbb{R}^{3}\right)\oplus L^{2}\left(\mathbb{R}^{3},\mathbb{R}^{3}\right)$
the operator $\mathcal{A}_{\mathrm{Max}}$ with its natural maximal
domain; that is,
\[
\dom(\mathcal{A}_{\mathrm{Max}})=\{(E,H)\in L^{2}(\R^{3},\R^{3})^{2}\,;\,\nabla\times E,\nabla\times H\in L^{2}(\R^{3},\R^{3})\},
\]
where the rotation is taken in the distributional sense, turns out
to be skew-selfadjoint (\cite{Primer}). The commonly included divergence
equations are actually consequences of the above system. 

As a curiosity we note that Maxwell's equation in the above simplified
form can also be written in complex notation as
\[
\left(\partial_{0}+\ii\nabla\times\right)\left(E+\ii H\right)=-J,
\]
known as the Riemann--Silberstein representation of Maxwell\textquoteright s
equations, attributed historically to Silberstein \cite{Silberstein1907},
see also for a more recent discussion of associated issues \cite{Bialynicki-Birula1996}.

The so-called extended Maxwell equation, see \cite{Pi84,TV07}, allows
for the divergence equations to be included explicitly, provided that
the data are suitably matched. In contrast to the classical system,
which is of size $6\times6$, the extended Maxwell system is of size
$8\times8$ and the corresponding equation is of the form
\[
\left(\partial_{0}+\mathcal{A}_{\mathrm{eMax}}\right)\left(\begin{array}{c}
\left(\begin{array}{c}
\varphi\\
E
\end{array}\right)\\
\left(\begin{array}{c}
\psi\\
H
\end{array}\right)
\end{array}\right)=\left(\begin{array}{c}
\left(\begin{array}{c}
0_{1\times1}\\
-J
\end{array}\right)\\
\left(\begin{array}{c}
q\\
0_{3\times1}
\end{array}\right)
\end{array}\right)
\]
\[
\mathcal{A}_{\mathrm{eMax}}=\left(\begin{array}{cc}
\left(\begin{array}{cc}
0_{1\times1} & 0_{1\times3}\\
0_{3\times1} & 0_{3\times3}
\end{array}\right) & -\left(\begin{array}{cc}
0_{1\times1} & {\nabla}^{\top}\\
-\nabla & {\nabla}\times
\end{array}\right)\\
\left(\begin{array}{cc}
0_{1\times1} & {\nabla}^{\top}\\
-\nabla & {\nabla}\times
\end{array}\right) & \left(\begin{array}{cc}
0_{1\times1} & 0_{1\times3}\\
0_{3\times1} & 0_{3\times3}
\end{array}\right)
\end{array}\right)
\]
Now, the operator $\mathcal{A}_{\mathrm{eMax}}$ with its natural
maximal domain is again skew-selfadjoint, but now in $L^{2}\left(\mathbb{R}^{3},\mathbb{R}^{4}\right)\oplus L^{2}\left(\mathbb{R}^{3},\mathbb{R}^{4}\right)$,
where we employed the usual custom to denote the vector-analytical
operators $\operatorname{grad},\ \operatorname{curl},\ \operatorname{div}$
by $\nabla,\ \nabla\times,\nabla^{\top}$, respectively.
\begin{rem}
Due to the choice of the right-hand side with
\begin{equation}
\partial_{0}q+\nabla^{\top}J=0\label{eq:charge-conservation}
\end{equation}
it follows that
\[
\varphi=\psi=0,
\]
which is a core requirement for electrodynamics. In this way, Maxwell's
equations can be recovered from the extended Maxwell system. Conversely,
the requirement of the scalar parts to vanish, requires the relation
\eqref{eq:charge-conservation} between charge and current, see Remark
\ref{rem:As-another-historical} below.
\end{rem}

\begin{defn*}
We introduce the mapping
\begin{align*}
Q:\mathbb{R}\times\mathbb{R}^{3} & \to\mathbb{R}^{4\times4}\\
\left(r,e\right) & \mapsto Q\left(r,e\right)\coloneqq r1_{4\times4}+Q\left(0,e\right)\coloneqq\left(\begin{array}{cc}
r & e^{\top}\\
-e & r+e\times
\end{array}\right)\coloneqq\left(\begin{array}{cc}
r & e^{\top}\\
-e & r1_{3\times3}+e\times
\end{array}\right),
\end{align*}
where 
\[
e\times\coloneqq\left(\begin{array}{ccc}
0 & -e_{3} & e_{2}\\
e_{3} & 0 & -e_{1}\\
-e_{2} & e_{1} & 0
\end{array}\right)\quad(e\in\R^{3}).
\]
\end{defn*}
\begin{rem}
Using this notation, we get that formally
\[
Q\left(0,\nabla\right)=\left(\begin{array}{cc}
0 & {\nabla}^{\top}\\
-\nabla & {\nabla}\times
\end{array}\right)
\]
yielding
\[
\left(\begin{array}{cc}
Q\left(0,0\right) & -Q\left(0,\nabla\right)\\
Q\left(0,\nabla\right) & Q\left(0,0\right)
\end{array}\right)=\mathcal{A}_{\mathrm{eMax}}.
\]
Writing for simplicity just $\left(\begin{array}{cc}
0 & -Q\left(0,\nabla\right)\\
Q\left(0,\nabla\right) & 0
\end{array}\right)$ for $\mathcal{A}_{\mathrm{eMax}}$, we find 
\[
\mathcal{A}^{2}_{\mathrm{eMax}}=\left(\begin{array}{cc}
-Q\left(0,\nabla\right)^{2} & 0\\
0 & -Q\left(0,\nabla\right)^{2}
\end{array}\right).
\]
Moreover, 
\[
Q\left(0,\nabla\right)^{2}=\left(\begin{array}{cc}
-\nabla^{\top}\nabla & {\nabla}^{\top}(\nabla\times)\\
-(\nabla\times)\nabla & -\nabla\nabla^{\top}+({\nabla}\times)(\nabla\times)
\end{array}\right)=-\Delta\left(\begin{array}{cc}
1 & 0_{1\times3}\\
0_{3\times1} & 1_{\mathbb{R}^{3}}
\end{array}\right).
\]
In other words, $\mathcal{A}^{2}_{\mathrm{eMax}}$ is a factorisation
of $\Delta$, which is what Dirac was looking for by a different,
more complicated approach.
\end{rem}

\begin{rem}
\label{rem:As-another-historical}As another historical note, we remark
here that Cabibbo and Ferrari actually suggested, referring to an
idea of Dirac, \cite{Dirac1931Monopole}, although not in our perspective,
to construct the potential\footnote{Gauging appears here if we allow additional degrees of freedom by
modifying the right-hand side $\left(\begin{array}{c}
\left(\begin{array}{c}
0\\
E
\end{array}\right)\\
\left(\begin{array}{c}
0\\
H
\end{array}\right)
\end{array}\right)$ to $\left(\begin{array}{c}
\left(\begin{array}{c}
\psi\\
E
\end{array}\right)\\
\left(\begin{array}{c}
\varphi\\
H
\end{array}\right)
\end{array}\right)$, thus, in contrast to electrodynamics, allowing for scalar parts.
In the process of reconstructing $\left(\begin{array}{c}
\left(\begin{array}{c}
0\\
E
\end{array}\right)\\
\left(\begin{array}{c}
0\\
H
\end{array}\right)
\end{array}\right)$ from the potential $\Pi$ one simply ignores these scalar parts.} $\Pi$ to the electromagnetic field more symmetrically as a solution
of
\[
\left(\partial_{0}-\mathcal{A}_{\mathrm{eMax}}\right)\Pi=\left(\begin{array}{c}
\left(\begin{array}{c}
0\\
E
\end{array}\right)\\
\left(\begin{array}{c}
0\\
H
\end{array}\right)
\end{array}\right)
\]
to allow for magnetic monopoles, \cite{CabibboFerrari1962Monopoles}.
If magnetic monopoles are excluded the potential $\Pi$ is a solution
of the same equation but now $\Pi$ turns out to be of the form
\[
\Pi=\left(\begin{array}{c}
\left(\begin{array}{c}
0\\
A
\end{array}\right)\\
\left(\begin{array}{c}
A_{0}\\
0
\end{array}\right)
\end{array}\right),
\]
which is the usual concept of a potential\footnote{In the light of this observation, we note that the Proca equation
is just
\[
\left(\left(\partial_{0}+\mathcal{A}_{\mathrm{eMax}}\right)+m^{2}\left(\partial_{0}-\mathcal{A}_{\mathrm{eMax}}\right)^{-1}\right)\left(\begin{array}{c}
\left(\begin{array}{c}
0\\
E
\end{array}\right)\\
\left(\begin{array}{c}
0\\
H
\end{array}\right)
\end{array}\right)=\left(\begin{array}{c}
\left(\begin{array}{c}
0\\
-J
\end{array}\right)\\
\left(\begin{array}{c}
q\\
0
\end{array}\right)
\end{array}\right).
\]
After applying $\left(\partial_{0}-\mathcal{A}_{\mathrm{eMax}}\right)$
to the Proca operator $\left(\left(\partial_{0}+\mathcal{A}_{\mathrm{eMax}}\right)+m^{2}\left(\partial_{0}-\mathcal{A}_{\mathrm{eMax}}\right)^{-1}\right)$
we obtain four block Klein--Gordon operators
\[
\left(\partial_{0}-\mathcal{A}_{\mathrm{eMax}}\right)\left(\left(\partial_{0}+\mathcal{A}_{\mathrm{eMax}}\right)+m^{2}\left(\partial_{0}-\mathcal{A}_{\mathrm{eMax}}\right)^{-1}\right)=\partial^{2}_{0}-\Delta+m^{2}.
\]
Since, as the notation suggests, we maintain the requirement that
the scalar parts are zero, we get that only the two block Klein--Gordon
operators, which are acting on $E,H$, remain relevant.} in electrodynamics, compare \cite{PTW2017}. Of course, factoring
$\Delta$ is the whole point of using potentials in the first place.
It was thought to be advantageous to reduce electrodynamics to the
wave equation. Indeed,
\[
\left(\partial^{2}_{0}-\Delta\right)\Pi=\left(\partial_{0}+\mathcal{A}_{\mathrm{eMax}}\right)\left(\partial_{0}-\mathcal{A}_{\mathrm{eMax}}\right)\Pi=\left(\partial_{0}+\mathcal{A}_{\mathrm{eMax}}\right)\left(\begin{array}{c}
\left(\begin{array}{c}
0\\
E
\end{array}\right)\\
\left(\begin{array}{c}
0\\
H
\end{array}\right)
\end{array}\right)=\left(\begin{array}{c}
\left(\begin{array}{c}
0\\
-J
\end{array}\right)\\
\left(\begin{array}{c}
q\\
0
\end{array}\right)
\end{array}\right),
\]
and so finding $\Pi$ as the solution of the wave equation, one could
recover the electromagnetic field from $\left(\partial_{0}-\mathcal{A}_{\mathrm{eMax}}\right)\Pi$.
The same mechanism yields
\begin{align*}
\left(\partial^{2}_{0}-\Delta\right)\left(\begin{array}{c}
\left(\begin{array}{c}
\psi\\
E
\end{array}\right)\\
\left(\begin{array}{c}
\varphi\\
H
\end{array}\right)
\end{array}\right) & =\left(\partial_{0}-\mathcal{A}_{\mathrm{eMax}}\right)\left(\partial_{0}+\mathcal{A}_{\mathrm{eMax}}\right)\left(\begin{array}{c}
\left(\begin{array}{c}
\psi\\
E
\end{array}\right)\\
\left(\begin{array}{c}
\varphi\\
H
\end{array}\right)
\end{array}\right)\\
 & =\left(\partial_{0}-\mathcal{A}_{\mathrm{eMax}}\right)\left(\begin{array}{c}
\left(\begin{array}{c}
0\\
-J
\end{array}\right)\\
\left(\begin{array}{c}
q\\
0
\end{array}\right)
\end{array}\right)\\
 & =\left(\begin{array}{c}
\left(\begin{array}{c}
0\\
-\partial_{0}J-\nabla q
\end{array}\right)\\
\left(\begin{array}{c}
\partial_{0}q+\nabla^{\top}J\\
0
\end{array}\right)
\end{array}\right),
\end{align*}
and, by unique solvability of the source free wave equation (assuming
a proper setting), we read off that $\psi=0$ as well as $\varphi=0$,
as is required in electrodynamics, only if 
\[
\partial_{0}q+\nabla^{\top}J=0.
\]
This is, of course, the reason for the charge conservation law \eqref{eq:charge-conservation}.
\end{rem}

Allowing in the extended Maxwell system arbitrary right-hand sides,
we have indeed to account for non-vanishing scalar parts in the solution
(as in the potential construction). Somewhat miraculously, we then
get the Dirac equation 
\begin{equation}
\left(\partial_{0}+\mathcal{A}_{\mathrm{eMax}}\right)\Psi=F\label{eq:Max-m=00003D0}
\end{equation}
for spin $\frac{1}{2}$ particles with no mass, as we show in the
following section.

\section{The Dirac Equation for Massless Particles}

The Dirac equation is usually written in complex notation. We introduce
the following mapping.
\begin{defn*}
Set
\begin{align*}
\Phi_{\mathbb{C}}:\mathbb{R}^{4} & \to\mathbb{C}^{2}\\
\left(\begin{array}{c}
E_{0}\\
E_{1}\\
E_{2}\\
E_{3}
\end{array}\right) & \mapsto\left(\begin{array}{c}
E_{0}+\ii E_{1}\\
E_{2}+\ii E_{3}
\end{array}\right).
\end{align*}
\end{defn*}
\begin{rem}
It is straight forward to check that $\Phi_{\C}$ is a unitary operator.
\end{rem}

We will not distinguish between $\Phi_{\C}$ and the mapping acting
on $(\R^{4})^{n}$ as $\Phi_{\C}$ on each of the $n$ coordinates.
Applying this unitary transformation to \prettyref{eq:Max-m=00003D0}
we obtain
\[
\left(\partial_{0}+\Phi_{\mathbb{C}}\mathcal{A}_{\mathrm{eMax}}\Phi^{*}_{\mathbb{C}}\right)\Phi_{\mathbb{C}}\Psi=\Phi_{\mathbb{C}}F,
\]
now holding in $L^{2}\left(\mathbb{R}^{3},\mathbb{C}^{2}\right)\oplus L^{2}\left(\mathbb{R}^{3},\mathbb{C}^{2}\right)$,
where we identified
\[
\left(\begin{array}{c}
E_{0}\\
E
\end{array}\right)=\left(\begin{array}{c}
E_{0}\\
E_{1}\\
E_{2}\\
E_{3}
\end{array}\right)=\left(\begin{array}{c}
\left(\begin{array}{c}
E_{0}\\
E_{1}
\end{array}\right)\\
\left(\begin{array}{c}
E_{2}\\
E_{3}
\end{array}\right)
\end{array}\right).
\]
Since 
\[
\left(\partial_{0}+\Phi_{\mathbb{C}}\mathcal{A}_{\mathrm{eMax}}\Phi^{*}_{\mathbb{C}}\right)=\Phi_{\mathbb{C}}\left(\partial_{0}+\mathcal{A}_{\mathrm{eMax}}\right)\Phi^{*}_{\mathbb{C}},
\]
we have indeed what we shall call unitary congruence between these
two system operators, which is considered to describe ``the same
physics''. We will use the following notation:
\begin{defn*}
For two operators $A$ on $H_{0}$ and $B$ on $H_{1}$ for two Hilbert
spaces $H_{0}$ and $H_{1}$ we write 
\[
A\rightleftharpoons B
\]
if $A$ and $B$ are \textbf{unitarily congruent}\footnote{We shall give preference to this wording, which is commonly called
`unitarily equivalent'. Equivalence of $A,B$, however, is already
established in dealing with linear equations as meaning 
\[
A=UBV
\]
for bijections $U,V$. If $U=V^{*}$ then $A,B$ are called congruent.
If $U=V^{-1}$ then $A,B$ are called similar. Since we want $U,V$
to be unitary, which would appropriately be called then `unitarily
equivalent'. This is, however, not intended, but rather the latter
two, which do indeed coincide for the unitary case $U=V^{*}=V^{-1}$
and are correctly labelled as `unitarily congruent' or `unitarily
similar'. }; i.e., if there exists a unitary mapping $U:H_{0}\to H_{1}$ such
that 
\[
A=U^{\ast}BU.
\]
\end{defn*}
\begin{rem}
\begin{enumerate}
\item The rationale behind this ``the same physics'' perspective is that
since no instruments are known to actually measure a closed, densely
defined, linear operator $A:\dom\left(A\right)\subseteq H\to H$,
one has to live with the real numbers produced by the values of the
bilinear form 
\begin{align*}
H\times\dom\left(A\right)\subseteq H\times H & \to\mathbb{R}\\
\left(\psi,\phi\right) & \mapsto\left\langle \psi|A\varphi\right\rangle _{H}
\end{align*}
associated with the operator. Indeed, it is often preferred to resort
to the quadratic form\footnote{Due to the unnecessary use of complex Hilbert spaces looking at quadratic
forms is the only way to guarantee real quantities for selfadjoint
operators. Here we consider all Hilbert spaces to be real, thus bypassing
this problem altogether.}, which on first sight only seems to work if $A$ is selfadjoint.
We may, however, consider a bilinear form as a quadratic form on $\dom\left(A\right)\times\dom\left(A^{*}\right)\subseteq H\times H$.
Noting that then
\begin{align*}
\left\langle \psi|A\varphi\right\rangle _{H} & =\frac{1}{2}\left(\left\langle \psi|A\varphi\right\rangle _{H}+\left\langle A^{*}\psi|\varphi\right\rangle _{H}\right),\\
 & =\frac{1}{2}\left(\left\langle \psi|A\varphi\right\rangle _{H}+\left\langle \varphi|A^{*}\psi\right\rangle _{H}\right),\\
 & =\frac{1}{2}\left\langle \left(\begin{array}{c}
\varphi\\
\psi
\end{array}\right)\Big|\left(\begin{array}{cc}
0 & A^{*}\\
A & 0
\end{array}\right)\left(\begin{array}{c}
\varphi\\
\psi
\end{array}\right)\right\rangle _{H\oplus H},
\end{align*}
we see that this change of perspective yields the quadratic form associated
with the now selfadjoint operator $\left(\begin{array}{cc}
0 & A^{*}\\
A & 0
\end{array}\right)$, which can be recovered from this quadratic form. 
\item If we choose not to identify $H$ with its dual, we can consider the
bilinear form 
\begin{align*}
H^{\prime}\times\dom\left(A\right)\subseteq H^{\prime}\times H & \to\mathbb{R},\\
\left(\psi,\phi\right) & \mapsto\psi\left(A\varphi\right),
\end{align*}
instead. From this we can reconstruct $A$, but in the same way also
$A^{\prime}$ with $A^{\prime}:\dom\left(A^{\prime}\right)\subseteq H^{\prime}\to H^{\prime}$
denoting the dual operator of $A$. With the definition of the dual
operator we have
\[
\psi\left(A\varphi\right)=\left(A^{\prime}\psi\right)\left(\varphi\right)
\]
and recalling that in the Hilbert space case we identify $H^{\prime\prime}=H$,
we calculate analogously
\begin{align*}
\psi\left(A\varphi\right) & =\frac{1}{2}\left(\psi\left(A\varphi\right)+\left(A^{\prime}\psi\right)\left(\varphi\right)\right)\\
 & =\frac{1}{2}\left(\psi\left(A\varphi\right)+\varphi\left(A^{\prime}\psi\right)\right)\\
 & =\frac{1}{2}\left(\begin{array}{c}
\varphi\\
\psi
\end{array}\right)\left(\left(\begin{array}{cc}
0 & A^{\prime}\\
A & 0
\end{array}\right)\left(\begin{array}{c}
\varphi\\
\psi
\end{array}\right)\right)
\end{align*}
with
\[
\left(\begin{array}{cc}
0 & A^{\prime}\\
A & 0
\end{array}\right):\dom\left(A\right)\oplus\dom\left(A^{\prime}\right)\subseteq H\oplus H^{\prime}\to H^{\prime}\oplus H.
\]
The point being that we transitioned to the selfdual replacement $\left(\begin{array}{cc}
0 & A^{\prime}\\
A & 0
\end{array}\right):\dom\left(A\right)\oplus\dom\left(A^{\prime}\right)\subseteq H\oplus H^{\prime}\to H^{\prime}\oplus H$ for $\left(\begin{array}{cc}
0 & A^{*}\\
A & 0
\end{array}\right)$ instead. Then we can use the quadratic form
\begin{align*}
\dom\left(A\right)\oplus\dom\left(A^{\prime}\right)\subseteq H\oplus H^{\prime} & \to\mathbb{R},\\
\Phi & \mapsto\frac{1}{2}\;\Phi\left(\left(\begin{array}{cc}
0 & A^{\prime}\\
A & 0
\end{array}\right)\Phi\right),
\end{align*}
from which we can also reconstruct the operators $A$ and $A^{\prime}$,
analogously to the selfadjoint case, by simply differentiating this
quadratic functional. Indeed, finding stationary points of the action
functional
\[
\left(\Phi\mapsto\frac{1}{2}\;\Phi\left(\left(\begin{array}{cc}
0 & A^{\prime}\\
A & 0
\end{array}\right)\Phi\right)-F\left(\Phi\right)\right)
\]
amounts to simply setting the Fréchet derivative $\left(\Phi\mapsto\frac{1}{2}\;\Phi\left(\left(\begin{array}{cc}
0 & A^{\prime}\\
A & 0
\end{array}\right)\Phi\right)-F\left(\Phi\right)\right)^{\prime}$ of this functional equal to zero yielding the so-called Euler equation
\[
\left(\begin{array}{cc}
0 & A^{\prime}\\
A & 0
\end{array}\right)\Phi-F=0,
\]
which decomposes into two uncoupled equations. This, roughly speaking,
is the core rationale of variational calculus, which rigorously speaking
presupposes the normal solvability of equations of the form $Au=f$.
\end{enumerate}
\end{rem}

\begin{prop}
\label{prop:AeMaxPauli}We have 
\[
\mathcal{A}_{\mathrm{eMax}}\rightleftharpoons\left(\begin{array}{cc}
0_{2\times2} & \sigma_{1}\partial_{1}+\sigma_{2}\partial_{2}+\sigma_{3}\partial_{3}\\
-(\sigma_{1}\partial_{1}+\sigma_{2}\partial_{2}+\sigma_{3}\partial_{3}) & 0_{2\times2}
\end{array}\right),
\]
where 
\[
\sigma_{1}\coloneqq\left(\begin{array}{cc}
0 & 1\\
1 & 0
\end{array}\right),\quad\sigma_{2}\coloneqq\left(\begin{array}{cc}
0 & -\ii\\
\ii & 0
\end{array}\right),\quad\sigma_{3}\coloneqq\left(\begin{array}{cc}
1 & 0\\
0 & -1
\end{array}\right)
\]
denote the thus defined Pauli-matrices.
\end{prop}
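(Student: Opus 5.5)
The plan is to exhibit the unitary explicitly as a composition of $\Phi_{\mathbb{C}}$ (applied componentwise) with a few further real-orthogonal, hence real-unitary, adjustments. The first step is purely algebraic. Write $\mathcal{A}_{\mathrm{eMax}}=\left(\begin{array}{cc}0 & -Q(0,\nabla)\\ Q(0,\nabla) & 0\end{array}\right)$ and $Q(0,\nabla)=Q(0,e^{(1)})\partial_{1}+Q(0,e^{(2)})\partial_{2}+Q(0,e^{(3)})\partial_{3}$, with $e^{(k)}$ the standard unit vectors. By linearity of $e\mapsto Q(0,e)$ it suffices to compute the three constant $4\times4$ matrices $\Phi_{\mathbb{C}}Q(0,e^{(k)})\Phi_{\mathbb{C}}^{*}$.

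For this to make sense as complex $2\times2$ matrices, I first check that each $Q(0,e^{(k)})$ commutes with $J=\diag(j,j)$, where $j=\left(\begin{array}{cc}0&-1\\1&0\end{array}\right)$; $J$ is the real form of multiplication by $\ii$ under $\Phi_{\mathbb{C}}$. For $k=1$ one reads off directly that $Q(0,e^{(1)})=\diag(-j,j)$, so $\Phi_{\mathbb{C}}Q(0,e^{(1)})\Phi_{\mathbb{C}}^{*}=-\ii\sigma_{3}$. For $k=2,3$ the matrices $Q(0,e^{(k)})$ swap the pairs $(E_{0},E_{1})$ and $(E_{2},E_{3})$ by $2\times2$ rotations/reflections, and I expect them to become $-\ii$ times the two off-diagonal Pauli matrices, possibly up to signs. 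So the outcome should be
\[
\Phi_{\mathbb{C}}Q(0,\nabla)\Phi_{\mathbb{C}}^{*}=-\ii\left(\tau_{1}\partial_{1}+\tau_{2}\partial_{2}+\tau_{3}\partial_{3}\right),
\]
where $(\tau_{1},\tau_{2},\tau_{3})$ is a signed permutation of $(\sigma_{3},\sigma_{1},\sigma_{2})$. Since the $Q(0,e^{(k)})$ are anticommuting skew matrices squaring to $-1$ (consistent with the computation $Q(0,\nabla)^{2}=-\Delta$ in the remark above), the $\tau_{k}$ automatically satisfy the Pauli relations.

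The second step is to move $(\tau_{1},\tau_{2},\tau_{3})$ to $(\sigma_{1},\sigma_{2},\sigma_{3})$. Any two triples of Hermitian $2\times2$ matrices satisfying the Pauli relations with the same orientation, i.e.\ with $\tau_{1}\tau_{2}\tau_{3}=\sigma_{1}\sigma_{2}\sigma_{3}=\ii$, are conjugate by an element of $SU(2)$; for the cyclic permutation an explicit choice is $\frac{1}{2}(1+\ii)(1+\ii\sigma_{1}+\ii\sigma_{2}+\ii\sigma_{3})$ up to normalisation. If the orientation is reversed, I compose with complex conjugation. Complex conjugation is not $\mathbb{C}$-linear, but it is real-orthogonal, which is all that unitary congruence requires in the paper's real setting. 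Applying the same unitary $W$ in both blocks gives
\[
\mathcal{A}_{\mathrm{eMax}}\rightleftharpoons\left(\begin{array}{cc}0 & \ii S\\ -\ii S & 0\end{array}\right),\qquad S\coloneqq\sigma_{1}\partial_{1}+\sigma_{2}\partial_{2}+\sigma_{3}\partial_{3}.
\]

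The third step removes the factor $\ii$. Conjugating with a block-diagonal unitary of the form $\diag(u,v)$ turns the off-diagonal entries into $u^{*}(\ii S)v$ and $-v^{*}(\ii S)u$. If $u$ and $v$ are scalar phases commuting with $S$, this cannot produce the pair $S,-S$: for example $\diag(1,\ii)$ yields $-S$ in both corners. I therefore expect this to be the main obstacle, and the signs must be fixed using the freedom available in the real picture. What I would try first is a unitary that does not commute with $S$, namely $u=1$ and $v=\ii\sigma_{2}\,\conj$ (charge-conjugation type, real-orthogonal). The identity $\sigma_{2}\overline{\sigma_{k}}\sigma_{2}=-\sigma_{k}$ flips the sign of $S$, while $\conj$ flips $\ii$. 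I expect that, after combining with the choice in step two, this produces exactly $\left(\begin{array}{cc}0 & S\\ -S & 0\end{array}\right)$.

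Should this not work out, the fallback is to redo step one with a different real-orthogonal relabelling of $\mathbb{R}^{4}$ in place of $\Phi_{\mathbb{C}}$, for instance one combined with $E_{1}\mapsto-E_{1}$. The sign pattern can then be bookkept directly, since only finitely many sign and permutation choices exist. Domains pose no difficulty throughout: all transformations are constant unitary matrices acting pointwise, so they map maximal domains onto maximal domains.
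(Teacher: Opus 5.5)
There is a genuine gap in your step three, and it cannot be closed by any choice of unitary. Your first two steps are sound and reach the correct intermediate operator $\begin{pmatrix}0&\i S\\-\i S&0\end{pmatrix}$ with $S=\sigma_1\partial_1+\sigma_2\partial_2+\sigma_3\partial_3$. Concretely, $\Phi_{\C}Q(0,\nabla)\Phi_{\C}^{*}=-\i(\sigma_3\partial_1-\sigma_2\partial_2+\sigma_1\partial_3)$. No complex conjugation is needed, because $Q(0,e^{(1)})Q(0,e^{(2)})=Q(0,e^{(3)})$ forces $\tau_1\tau_2=\i\tau_3$.

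The obstruction in step three is structural. Each $\sigma_k$ is Hermitian and each $\partial_k$ is skew, so $S^{*}=-S$, and the target $B=\begin{pmatrix}0&S\\-S&0\end{pmatrix}$ is therefore \emph{selfadjoint}. By contrast, $\mathcal{A}_{\mathrm{eMax}}$ is skew-selfadjoint. For every real-unitary $U$ (complex-linear, antilinear, or an arbitrary real relabelling of components) one has $(U^{*}BU)^{*}=U^{*}BU$. So $\mathcal{A}_{\mathrm{eMax}}=U^{*}BU$ would give $\mathcal{A}_{\mathrm{eMax}}=\mathcal{A}_{\mathrm{eMax}}^{*}=-\mathcal{A}_{\mathrm{eMax}}$, which is absurd.

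Your candidate $v=\i\sigma_2\conj$ shows this concretely. It anticommutes with both $\i$ and $S$, hence commutes with $\i S$. Since $v^{*}=-v$, conjugation by $\diag(1,v)$ puts $\i Sv$ into \emph{both} off-diagonal corners. Your fallback of further sign and permutation bookkeeping fails for the same reason.

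So the proposition is false as printed: the lower-left block should carry the same sign as the upper-right one. Finishing your argument with $u=1$, $v=-\i$ proves $\mathcal{A}_{\mathrm{eMax}}\rightleftharpoons\begin{pmatrix}0&S\\S&0\end{pmatrix}$, the familiar $\alpha\cdot\nabla$ form. This agrees with the paper's own later display $\begin{pmatrix}0&Q'(0,e_1)Q(0,\nabla)\\Q'(0,e_1)Q(0,\nabla)&0\end{pmatrix}$, in which $Q'(0,e_1)$ acts as multiplication by $\i$.

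The paper's proof reaches the opposite sign only through a slip. After obtaining $\Phi_{\C}Q(0,\nabla)\Phi_{\C}^{*}=-\i(\dots)$, it writes $+\i(\dots)$ instead of $-\i(\dots)$ in the lower-left block of the transformed $\mathcal{A}_{\mathrm{eMax}}$. Only because of this does $\diag(1,\i)$ appear to produce opposite signs. Your observation that scalar phases always give equal signs was exactly right. The correct conclusion is that the statement needs correcting, not that a cleverer unitary exists.
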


\begin{proof}
Recall the notation 
\[
Q\left(0,\nabla\right)=\left(\begin{array}{cc}
0 & {\nabla}^{\top}\\
-\nabla & {\nabla}\times
\end{array}\right).
\]
Rearranging the terms according to the block structure of $\Phi_{\C}$,
we obtain
\begin{align*}
Q(0,\nabla) & =\left(\begin{array}{cc}
0 & \left(\begin{array}{ccc}
\partial_{1} & \partial_{2} & \partial_{3}\end{array}\right)\\
-\left(\begin{array}{c}
\partial_{1}\\
\partial_{2}\\
\partial_{3}
\end{array}\right) & \left(\begin{array}{ccc}
0 & -\partial_{3} & \partial_{2}\\
\partial_{3} & 0 & -\partial_{1}\\
-\partial_{2} & \partial_{1} & 0
\end{array}\right)
\end{array}\right)\\
 & =\left(\begin{array}{cc}
\left(\begin{array}{cc}
0 & \partial_{1}\\
-\partial_{1} & 0
\end{array}\right) & \left(\begin{array}{cc}
\partial_{2} & \partial_{3}\\
-\partial_{3} & \partial_{2}
\end{array}\right)\\
\left(\begin{array}{cc}
-\partial_{2} & \partial_{3}\\
-\partial_{3} & -\partial_{2}
\end{array}\right) & \left(\begin{array}{cc}
0 & -\partial_{1}\\
\partial_{1} & 0
\end{array}\right)
\end{array}\right).
\end{align*}
Hence, for a real-valued vector field $E=(E_{j})_{j\in\{0,1,2,3\}}$
we compute
\begin{align*}
\Phi_{\mathbb{C}}Q(0,\nabla)\Phi^{*}_{\mathbb{C}}\Phi_{\mathbb{C}}E & =\Phi_{\mathbb{C}}Q(0,\nabla)E\\
 & =\Phi_{\mathbb{C}}\text{\ensuremath{\left(\begin{array}{c}
\partial_{1}E_{1}+\partial_{2}E_{2}+\partial_{3}E_{3}\\
-\partial_{1}E_{0}-\partial_{3}E_{2}+\partial_{2}E_{3}\\
-\partial_{2}E_{0}+\partial_{3}E_{1}-\partial_{1}E_{3}\\
-\partial_{3}E_{0}-\partial_{2}E_{1}+\partial_{1}E_{2}
\end{array}\right)}}\\
 & =\partial_{1}\left(\begin{array}{c}
-\mathrm{i}E_{0}+E_{1}\\
-E_{3}+\mathrm{i}E_{2}
\end{array}\right)+\partial_{2}\left(\begin{array}{c}
\mathrm{i}E_{3}+E_{2}\\
-E_{0}-\mathrm{i}E_{1}
\end{array}\right)+\partial_{3}\left(\begin{array}{c}
-\mathrm{i}E_{2}+E_{3}\\
E_{1}-\mathrm{i}E_{0}
\end{array}\right)\\
 & =-\mathrm{i}\partial_{1}\sigma_{3}\Phi_{\mathbb{C}}E-\mathrm{i}\partial_{2}\sigma_{2}\Phi_{\mathbb{C}}E-\mathrm{i}\partial_{3}\sigma_{1}\Phi_{\mathbb{C}}E.
\end{align*}
Thus 
\[
\Phi_{\C}Q(0,\nabla)\Phi^{\ast}_{\C}=\left(\begin{array}{cc}
-\i\partial_{1} & \partial_{2}-\i\partial_{3}\\
-\partial_{2}-\i\partial_{3} & \i\partial_{1}
\end{array}\right)=-\i\left(\sigma_{3}\partial_{1}+\sigma_{2}\partial_{2}+\sigma_{1}\partial_{3}\right).
\]
Hence,
\[
\mathcal{A}_{\mathrm{eMax}}\rightleftharpoons\left(\begin{array}{cc}
0_{2\times2} & \i\left(\sigma_{3}\partial_{1}+\sigma_{2}\partial_{2}+\sigma_{1}\partial_{3}\right)\\
\i(\sigma_{3}\partial_{1}+\sigma_{2}\partial_{2}+\sigma_{1}\partial_{3}) & 0_{2\times2}
\end{array}\right).
\]
Applying now the unitary operator $\left(\begin{array}{cc}
1_{\mathbb{C}^{2}} & 0_{2\times2}\\
0_{2\times2} & \ii\,1_{\mathbb{C}^{2}}
\end{array}\right)$ we infer 
\begin{align*}
\mathcal{A}_{\mathrm{eMax}} & \rightleftharpoons\left(\begin{array}{cc}
1_{\mathbb{C}^{2}} & 0_{2\times2}\\
0_{2\times2} & \ii\,1_{\mathbb{C}^{2}}
\end{array}\right)\left(\begin{array}{cc}
0_{2\times2} & \i\left(\sigma_{3}\partial_{1}+\sigma_{2}\partial_{2}+\sigma_{1}\partial_{3}\right)\\
\i(\sigma_{3}\partial_{1}+\sigma_{2}\partial_{2}+\sigma_{1}\partial_{3}) & 0_{2\times2}
\end{array}\right)\left(\begin{array}{cc}
1_{\mathbb{C}^{2}} & 0_{2\times2}\\
0_{2\times2} & \ii\,1_{\mathbb{C}^{2}}
\end{array}\right)^{\ast}\\
 & =\left(\begin{array}{cc}
0_{2\times2} & \sigma_{3}\partial_{1}+\sigma_{2}\partial_{2}+\sigma_{1}\partial_{3}\\
-(\sigma_{3}\partial_{1}+\sigma_{2}\partial_{2}+\sigma_{1}\partial_{3}) & 0_{2\times2}
\end{array}\right)
\end{align*}
and by a coordinate change $\left(x_{1},x_{2},x_{3}\right)\mapsto\left(x_{3},x_{2},x_{1}\right)$
we finally arrive at
\[
\mathcal{A}_{\mathrm{eMax}}\rightleftharpoons\left(\begin{array}{cc}
0_{2\times2} & \sigma_{1}\partial_{1}+\sigma_{2}\partial_{2}+\sigma_{3}\partial_{3}\\
-(\sigma_{1}\partial_{1}+\sigma_{2}\partial_{2}+\sigma_{3}\partial_{3}) & 0_{2\times2}
\end{array}\right).\qedhere
\]
\end{proof}

In the established sense of unitary congruence, the extended Maxwell
equation \eqref{eq:Max-m=00003D0} therefore describes indeed the
same physics as the standard Dirac operator for massless particles.
In the light of this observation we feel that the real version \eqref{eq:Max-m=00003D0}
of the Dirac equation for massless particles might be an easier and
possibly more transparent approach to the fundamental equations of
quantum physics. After all there are more unitary congruences in the
real than in the complex case.

\section{\label{sec:Hilbert-Algebras-of}Hilbert Algebras of Real Matrices}

We note that the Pauli-matrices as they have occurred in Proposition
\ref{prop:AeMaxPauli} may be viewed as certain basis vectors and,
indeed, they are somewhat remindful of quaternions. To investigate
this deeper we need a bit more algebra, which, however, will turn
out to be much less complicated than the classical $\alpha,\beta,\gamma$
matrix considerations. Indeed, we shall in some respects not rely
on any particular choice of basis in our considerations, which will
turn out to simplify matters substantially. 

We start with the standard Hilbert algebra $\mathbb{R}^{\left(n+1\right)\times\left(n+1\right)}$
of $(n+1)\times(n+1)$-matrices, $n\in\mathbb{N}.$ The standard inner
product is given by 
\[
\left(A,B\right)\mapsto\frac{1}{n+1}\,\trace\left(A^{\top}B\right)
\]
and shows isometry to
\[
\mathbb{R}^{\left(n+1\right)^{2}}.
\]
We are particularly interested in $n=3$ and in a particular subalgebra
of $\R^{4\times4}$. 
\begin{defn*}
Set
\[
\mathbb{H}\coloneqq\left\{ Q\left(e_{0},e\right)|e_{0}\in\mathbb{R},e\in\mathbb{R}^{3}\right\} ,
\]
which are indeed the so-called \textbf{quaternions}. 

In the next statement, we will study matrix multiplication of quaternions.
It will turn out that $\mathbb{H}$ is a division ring (also called
skew-field).
\end{defn*}
\begin{lem}
\label{lem:Hcompurules}The set $\mathbb{H}$ is a subalgebra of $\R^{4\times4}$
with 
\[
Q(e_{0},e)Q(f_{0},f)=Q(e_{0}f_{0}-e^{\top}f,e_{0}f+f_{0}e+e\times f)
\]
and 
\[
Q(e_{0},e)^{-1}=\frac{1}{|(e_{0},e)|^{2}}Q(e_{0},-e)
\]
for $(e_{0},e)\ne0$, where $|\cdot|$ denotes the euclidean norm.
In particular, if $e,f\in\mathbb{R}^{3}$ with $e\bot f$, we have
\[
Q(0,e)Q(0,e)=-|e|^{2}1_{\mathbb{R}^{2}}\text{ and }Q(0,e)Q(0,f)=Q(0,e\times f)=-Q(0,f)Q(0,e).
\]
\end{lem}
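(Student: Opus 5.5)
The core of the proof is the multiplication formula; everything else follows from it. I would compute the product $Q(e_{0},e)Q(f_{0},f)$ directly in $2\times2$ block form, using the definition
\[
Q(e_{0},e)=\left(\begin{array}{cc}
e_{0} & e^{\top}\\
-e & e_{0}+e\times
\end{array}\right).
\]
Block multiplication gives four entries:
\begin{itemize}
\item top-left: $e_{0}f_{0}-e^{\top}f$;
\item top-right: $e_{0}f^{\top}+f_{0}e^{\top}+e^{\top}(f\times)$;
\item bottom-left: $-e_{0}f-f_{0}e-(e\times)f$;
\item bottom-right: $-ef^{\top}+e_{0}f_{0}+e_{0}(f\times)+f_{0}(e\times)+(e\times)(f\times)$.
\end{itemize}

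To match these with $Q(e_{0}f_{0}-e^{\top}f,\,e_{0}f+f_{0}e+e\times f)$ I need three vector identities, each checked componentwise from the explicit matrix for $e\times$:
\begin{enumerate}
\item $e^{\top}(f\times)=(e\times f)^{\top}$. This holds because $(f\times)^{\top}=-f\times$, so $(f\times)^{\top}e=-f\times e=e\times f$.
\item $(e\times)f=e\times f$. This is the definition.
\item $(e\times)(f\times)=fe^{\top}-(e^{\top}f)1_{3\times3}$. This is the matrix form of the BAC--CAB rule $e\times(f\times x)=f(e^{\top}x)-(e^{\top}f)x$.
\end{enumerate}
With (3), the bottom-right block becomes
\[
(e_{0}f_{0}-e^{\top}f)1+\bigl(e_{0}f+f_{0}e\bigr)\times+\bigl(fe^{\top}-ef^{\top}\bigr).
\]
The last piece is identified by the identity $fe^{\top}-ef^{\top}=(e\times f)\times$, again verified via BAC--CAB: $(e\times f)\times x=f(e^{\top}x)-e(f^{\top}x)$.

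This fourth identity, together with the sign bookkeeping in the bottom-right block, is the only real obstacle. Everything else is routine. Once the formula is established, closure of $\mathbb{H}$ under multiplication follows, and linearity of $Q$ gives closure under addition and scalar multiplication, so $\mathbb{H}$ is a subalgebra.

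For the inverse, I apply the product formula to $Q(e_{0},e)Q(e_{0},-e)$. The result is
\[
Q\bigl(e_{0}^{2}+|e|^{2},\;-e_{0}e+e_{0}e-e\times e\bigr)=|(e_{0},e)|^{2}\,1_{4\times4},
\]
since $e\times e=0$. Dividing by $|(e_{0},e)|^{2}\neq0$ gives the right inverse, and for a square matrix a right inverse is two-sided. This also shows that $\mathbb{H}$ is a division ring.

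The special cases also follow from the product formula with $e_{0}=f_{0}=0$. First, $Q(0,e)Q(0,e)=Q(-|e|^{2},0)=-|e|^{2}1$; the $1_{\mathbb{R}^{2}}$ in the statement should read $1_{4\times4}$. Second, for $e\bot f$ we have $e^{\top}f=0$, so $Q(0,e)Q(0,f)=Q(0,e\times f)$. Antisymmetry of the cross product then gives $Q(0,f)Q(0,e)=Q(0,f\times e)=-Q(0,e\times f)$.
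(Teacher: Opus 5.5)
Your proposal is correct and follows essentially the same route as the paper: a direct block computation of the product using $e^{\top}(f\times)=(e\times f)^{\top}$, $(e\times)(f\times)=fe^{\top}-(e^{\top}f)1_{3\times3}$ and $(e\times f)\times=fe^{\top}-ef^{\top}$, then the inverse from $Q(e_{0},e)Q(e_{0},-e)=|(e_{0},e)|^{2}1_{4\times4}$, with the special cases read off from the product formula. You are also right that $1_{\mathbb{R}^{2}}$ in the statement should read $1_{4\times4}$.
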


\begin{proof}
For $e_{0},f_{0}\in\R$ and $e,f\in\R^{3}$ we compute 
\begin{align*}
 & Q(e_{0},e)Q(f_{0},f)\\
 & =\left(e_{0}1_{4\times4}+\left(\begin{array}{cc}
0 & e^{\top}\\
-e & e\times
\end{array}\right)\right)\left(f_{0}1_{4\times4}+\left(\begin{array}{cc}
0 & f^{\top}\\
-f & f\times
\end{array}\right)\right)\\
 & =e_{0}f_{0}1_{4\times4}+\left(\begin{array}{cc}
0 & f_{0}e^{\top}\\
-f_{0}e & f_{0}(e\times)
\end{array}\right)+\left(\begin{array}{cc}
0 & e_{0}f^{\top}\\
-e_{0}f & e_{0}(f\times)
\end{array}\right)+\left(\begin{array}{cc}
-e^{\top}f & e^{\top}\left(f\times\right)\\
-e\times f & -ef^{\top}+(e\times)(f\times)
\end{array}\right).
\end{align*}
Now an easy computation reveals
\begin{align*}
(e\times)(f\times) & =\left(\begin{array}{ccc}
0 & -e_{3} & e_{2}\\
e_{3} & 0 & -e_{1}\\
-e_{2} & e_{1} & 0
\end{array}\right)\left(\begin{array}{ccc}
0 & -f_{3} & f_{2}\\
f_{3} & 0 & -f_{1}\\
-f_{2} & f_{1} & 0
\end{array}\right)\\
 & =-e^{\top}f1_{3\times3}+fe^{\top}
\end{align*}
and thus,
\begin{align*}
 & Q(e_{0},e)Q(f_{0},f)\\
 & =\left(e_{0}f_{0}-e^{\top}f\right)1_{4\times4}+\left(\begin{array}{cc}
0 & e_{0}f^{\top}+f_{0}e^{\top}+\left(e\times f\right)^{\top}\\
-(e_{0}f+f_{0}e+e\times f) & e_{0}(f\times)+f_{0}(e\times)+fe^{\top}-ef^{\top}
\end{array}\right)\\
 & =Q(e_{0}f_{0}-e^{\top}f,e_{0}f+f_{0}e+e\times f),
\end{align*}
where we have used $e^{\top}(f\times)=(e\times f)^{\top}$ and 
\[
\left(e\times f\right)\times=fe^{\top}-ef^{\top}.
\]
Using this formula, we get that 
\[
Q(e_{0},e)Q(e_{0},-e)=Q(e^{2}_{0}+e^{\top}e,0)=|(e_{0},e)|^{2}1_{4\times4}
\]
showing the formula for the inverse. The remaining formulas are direct
consequences of the first. 
\end{proof}

\begin{rem}
The space $\mathbb{H}$ can be constructed from $\mathbb{R}$ by two
steps of the Cayley--Dickson process of dimension doubling of {*}-algebras\footnote{The matrix formulation allows to conveniently use the usual matrix
product as the operation in the new $*$-algebra for the first two
steps starting from $\mathbb{R}$.}
\[
\left(A,B\right)\mapsto\left(\begin{array}{cc}
A & -B^{*}\\
B & A^{*}
\end{array}\right).
\]
The first step
\begin{align*}
\mathbb{R}^{2} & \to\mathbb{C}\\
\left(\begin{array}{c}
x\\
y
\end{array}\right) & \mapsto\left(\begin{array}{cc}
x & -y\\
y & x
\end{array}\right)
\end{align*}
produces the complex numbers from pairs of reals. The second step
\begin{align*}
\mathbb{C}^{2} & \to\mathbb{\mathbb{H}}\\
\left(\begin{array}{c}
z\\
\zeta
\end{array}\right) & \mapsto\left(\begin{array}{cc}
z & -\zeta^{*}\\
\zeta & z^{*}
\end{array}\right)
\end{align*}
leads to quaternions from pairs of complex ``numbers''. Here $z^{\ast}=x-\mathrm{i}y$
is just complex conjugation of $z=x+\mathrm{i}y$. With $z=\left(\begin{array}{cc}
x & -y\\
y & x
\end{array}\right)=x+y\,\ii$ and $\zeta=\left(\begin{array}{cc}
\xi & -\psi\\
\psi & \xi
\end{array}\right)=\xi+\psi\,\ii$ we have
\begin{align*}
\left(\begin{array}{cc}
z & -\zeta^{*}\\
\zeta & z^{*}
\end{array}\right) & =\left(\begin{array}{cc}
\left(\begin{array}{cc}
x & -y\\
y & x
\end{array}\right) & \left(\begin{array}{cc}
-\xi & -\psi\\
\psi & -\xi
\end{array}\right)\\
\left(\begin{array}{cc}
\xi & -\psi\\
\psi & \xi
\end{array}\right) & \left(\begin{array}{cc}
x & y\\
-y & x
\end{array}\right)
\end{array}\right)\\
 & =\left(\begin{array}{cccc}
0 & -y & -\xi & -\psi\\
y & 0 & \psi & -\xi\\
\xi & -\psi & 0 & y\\
\psi & \xi & -y & 0
\end{array}\right)+\left(\begin{array}{cccc}
x & 0 & 0 & 0\\
0 & x & 0 & 0\\
0 & 0 & x & 0\\
0 & 0 & 0 & x
\end{array}\right)\\
 & =Q\left(x,-\left(\begin{array}{c}
y\\
\xi\\
\psi
\end{array}\right)\right).
\end{align*}
\end{rem}

Of course, anything unitarily congruent would produce another representation
of the quaternions. The example of interest is here the Lorentz signature
matrix
\[
\sigma\coloneqq\left(\begin{array}{cc}
1 & \left(\begin{array}{ccc}
0 & 0 & 0\end{array}\right)\\
\left(\begin{array}{c}
0\\
0\\
0
\end{array}\right) & \left(\begin{array}{ccc}
-1 & 0 & 0\\
0 & -1 & 0\\
0 & 0 & -1
\end{array}\right)
\end{array}\right)
\]
as a unitary congruence, which is unitary and selfadjoint. 
\begin{defn*}
For $e_{0}\in\R$ and $e\in\R^{3}$ we define
\begin{align*}
Q^{\prime}\left(e_{0},e\right) & \coloneqq\sigma Q\left(e_{0},e\right)\sigma\\
 & =\left(\begin{array}{cc}
e_{0} & -e^{\top}\\
e & e_{0}1_{3\times3}+e\times
\end{array}\right)=e_{0}1_{4\times4}+Q^{\prime}\left(0,e\right)
\end{align*}
and set 
\[
\mathbb{H}'\coloneqq\{Q'(e_{0},e)\,;\,e_{0}\in\R,e\in\R^{3}\}
\]
as the set of \textbf{co-quaternions}. 
\end{defn*}
\begin{rem}
By unitary congruence it is clear that $\mathbb{H}'$ is also a division
ring. 
\end{rem}

As we shall see both $\mathbb{H}^{\prime}$ and $\mathbb{H}$ will
be of importance here. 
\begin{prop}
\label{prop:commute_Q_Q'}Let $e_{0},f_{0}\in\R$ and $e,f\in\R^{3}$.
Then 
\[
Q^{\prime}\left(e_{0},e\right)Q\left(f_{0},f\right)=Q\left(f_{0},f\right)Q^{\prime}\left(e_{0},e\right)
\]
and hence, in particular the matrix 
\[
Q^{\prime}\left(e_{0},e\right)Q\left(f_{0},f\right)
\]
is selfadjoint. Furthermore, if $|f|=|e|=1,$ we have
\[
Q(0,f)^{2}=-1_{4\times4}=Q'(0,e)^{2}.
\]
Moreover, if $e\bot f$ then 
\[
Q^{\prime}\left(0,e\right)Q^{\prime}\left(0,f\right)\text{ and }Q\left(0,e\right)Q\left(0,f\right)
\]
are skew-selfadjoint. 
\end{prop}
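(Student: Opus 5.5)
The plan is to reduce everything to the purely vectorial parts $Q(0,f)$ and $Q'(0,e)$, then combine \prettyref{lem:Hcompurules} with the unitary congruence by the Lorentz signature matrix $\sigma$.

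\textbf{Commutativity.} Write $Q'(e_0,e)=e_0 1_{4\times4}+Q'(0,e)$ and $Q(f_0,f)=f_0 1_{4\times4}+Q(0,f)$. The scalar multiples of the identity commute with everything, so it suffices to show $Q'(0,e)Q(0,f)=Q(0,f)Q'(0,e)$. I will multiply the block matrices $\left(\begin{array}{cc}0 & -e^{\top}\\ e & e\times\end{array}\right)$ and $\left(\begin{array}{cc}0 & f^{\top}\\ -f & f\times\end{array}\right)$ in both orders. The simplifications are the identities already used in the proof of \prettyref{lem:Hcompurules}:
\[
e^{\top}(f\times)=(e\times f)^{\top},\qquad (e\times)(f\times)=-e^{\top}f\,1_{3\times3}+fe^{\top},
\]
together with $f\times e=-e\times f$. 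With these, both products should reduce to
\[
\left(\begin{array}{cc}e^{\top}f & -(e\times f)^{\top}\\ -e\times f & ef^{\top}+fe^{\top}-e^{\top}f\,1_{3\times3}\end{array}\right).
\]
This matrix is visibly symmetric.

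\textbf{Selfadjointness.} Here I have to be careful about what is actually true. By the definitions, $Q(0,f)$ and $Q'(0,e)$ are skew-symmetric, since $e\times$ is skew and the off-diagonal blocks are $\pm$ transposes of each other. For two commuting skew matrices $A,B$ we get $(AB)^{\top}=B^{\top}A^{\top}=BA=AB$. So $Q'(0,e)Q(0,f)$ is selfadjoint, which also matches the explicit symmetric form above.

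For general $e_0,f_0$ the product contains the cross terms $e_0Q(0,f)+f_0Q'(0,e)$. These are skew, so selfadjointness holds only for the vectorial parts, i.e.\ when $e_0=f_0=0$ (or more generally when these cross terms vanish). I would state and prove the ``in particular'' claim in that form.

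\textbf{Squares and skew products.} By \prettyref{lem:Hcompurules}, $Q(0,f)^2=Q(-|f|^2,0)=-|f|^2 1_{4\times4}$, which equals $-1_{4\times4}$ when $|f|=1$. For $Q'$ I use $\sigma^2=1_{4\times4}$:
\[
Q'(0,e)^2=\sigma Q(0,e)\sigma\sigma Q(0,e)\sigma=\sigma Q(0,e)^2\sigma=-1_{4\times4}.
\]
If $e\perp f$, \prettyref{lem:Hcompurules} gives $Q(0,e)Q(0,f)=Q(0,e\times f)$, which is skew-symmetric by the observation above. Conjugating by $\sigma$ gives $Q'(0,e)Q'(0,f)=\sigma Q(0,e\times f)\sigma=Q'(0,e\times f)$. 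This is again skew, because $\sigma$ is orthogonal and symmetric and so preserves skew-symmetry.

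\textbf{Main obstacle.} The only genuinely computational step is the block multiplication in the commutativity part, in particular checking that the lower-right $3\times3$ blocks agree. This should follow directly from the identity for $(e\times)(f\times)$. The remaining issue is interpretive: the selfadjointness claim must be restricted to the pure parts as described.
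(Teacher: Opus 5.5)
Your proposal is correct and follows the paper's route: reduce to the pure parts, multiply the block matrices using $e^{\top}(f\times)=(e\times f)^{\top}$ and $(e\times)(f\times)=fe^{\top}-e^{\top}f\,1_{3\times3}$, then get the squares and the skew products from \prettyref{lem:Hcompurules} and conjugation by $\sigma$. Restricting the selfadjointness claim is also right, and more careful than the paper: its adjoint argument implicitly needs $e_{0}Q(0,f)+f_{0}Q'(0,e)=0$, and for instance $Q'(1,0)Q(0,e_{1})=Q(0,e_{1})$ is nonzero and skew, hence not selfadjoint.
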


\begin{proof}
As $Q'(e_{0},e)=e_{0}1_{4\times4}+Q'(0,e)$ and $Q(f_{0},f)=f_{0}1_{4\times4}+Q(0,f)$
and both $Q'(0,e)$ and $Q(0,f)$ are skew-selfadjoint, the selfadjointness
of $Q^{\prime}\left(e_{0},e\right)Q\left(f_{0},f\right)$ follows
from the commutator relation by evaluating $(Q^{\prime}\left(e_{0},e\right)Q\left(f_{0},f\right))^{*}$
and $(Q\left(f_{0},f\right)Q^{\prime}\left(e_{0},e\right))^{*}$.
Moreover, it suffices to show $Q'(0,e)Q(0,f)=Q(0,f)Q'(0,e).$ We compute
\begin{align*}
Q'(0,e)Q(0,f) & =\left(\begin{array}{cc}
0 & -e^{\top}\\
e & e\times
\end{array}\right)\left(\begin{array}{cc}
0 & f^{\top}\\
-f & f\times
\end{array}\right)\\
 & =\left(\begin{array}{cc}
e^{\top}f & -e^{\top}\left(f\times\right)\\
-e\times f & ef^{\top}+(e\times)(f\times)
\end{array}\right)\\
 & =\left(\begin{array}{cc}
e^{\top}f & \left(f\times e\right)^{\top}\\
f\times e & ef^{\top}-e^{\top}f1_{3\times3}+fe^{\top}
\end{array}\right)\\
 & =\left(\begin{array}{cc}
f^{\top}e & f^{\top}e\times\\
f\times e & fe^{\top}+(f\times)(e\times)
\end{array}\right)\\
 & =\left(\begin{array}{cc}
0 & f^{\top}\\
-f & f\times
\end{array}\right)\left(\begin{array}{cc}
0 & -e^{\top}\\
e & e\times
\end{array}\right)=Q(0,f)Q'(0,e)
\end{align*}
showing the claim. %
{} Moreover, by Lemma \ref{lem:Hcompurules}
\[
Q(0,f)^{2}=Q(-|f|^{2},0)=-|f|^{2}1_{4\times4}
\]
and hence also, 
\[
Q'(0,e)^{2}=\sigma Q(0,e)^{2}\sigma=-|e|^{2}1_{4\times4}.
\]
Assume now $e\bot f$. Then, by Lemma \ref{lem:Hcompurules},
\[
Q(0,e)Q(0,f)=Q(0,e\times f)=-Q(0,e\times f)^{*}
\]
establishing skew-selfadjointness. As $Q^{\prime}\left(0,e\right)Q^{\prime}\left(0,f\right)=\sigma Q(0,e)Q(0,f)\sigma$
is unitarily equivalent to $Q(0,e)Q(0,f)$ the former is skew-selfadjoint
since the latter is.
\end{proof}

\begin{rem}[Canonical basis for quaternions]
If we assume that $e,f\in\R^{3}$ are orthonormal, we get that also
$e\times f$ is normalised and $\left(e,f,e\times f\right)$ is a
right-handed orthonormal system. Moreover, since
\[
Q\left(0,e\right)^{2}=-1=Q^{\prime}\left(0,e\right)^{2},
\]
all such elements can serve as a complex unit. Indeed, choosing in
particular the canonical basis $\left(e_{1},e_{2},e_{3}\right)$
\[
e_{1}=\left(\begin{array}{c}
1\\
0\\
0
\end{array}\right),\:e_{2}=\left(\begin{array}{c}
0\\
1\\
0
\end{array}\right),\:e_{3}=\left(\begin{array}{c}
0\\
0\\
1
\end{array}\right)=e_{1}\times e_{2},
\]
 of $\mathbb{R}^{3}$, we recover 
\begin{align*}
\mathfrak{i} & \coloneqq-Q\left(0,e_{1}\right)=\left(\begin{array}{cccc}
0 & -1 & 0 & 0\\
1 & 0 & 0 & 0\\
0 & 0 & 0 & 1\\
0 & 0 & -1 & 0
\end{array}\right)=\left(\begin{array}{cc}
\ii & 0\\
0 & -\ii
\end{array}\right),\\
\mathfrak{j} & \coloneqq Q\left(0,e_{2}\right)=\left(\begin{array}{cccc}
0 & 0 & 1 & 0\\
0 & 0 & 0 & 1\\
-1 & 0 & 0 & 0\\
0 & -1 & 0 & 0
\end{array}\right)=\left(\begin{array}{cc}
0 & 1\\
-1 & 0
\end{array}\right),\\
\mathfrak{k} & \coloneqq-Q\left(0,e_{3}\right)=\left(\begin{array}{cccc}
0 & 0 & 0 & -1\\
0 & 0 & 1 & 0\\
0 & -1 & 0 & 0\\
1 & 0 & 0 & 0
\end{array}\right)=\left(\begin{array}{cc}
0 & \ii\\
\ii & 0
\end{array}\right),
\end{align*}
the commonly used orthonormal basis $\left(\mathfrak{i},\mathfrak{j},\mathfrak{k}\right)$
of the pure quaternions, that is, for the ortho-complement $\mathbb{R}^{\perp_{\mathbb{H}}}$
of $\mathbb{R}$ in $\mathbb{H}$. The analogous result follows, of
course, with $Q$ replaced by $Q^{\prime}$. %
\end{rem}

\begin{rem}[Exponentials of quaternions]
We recall that we have a Hilbert space of normal matrices 
\[
\left\{ Q\left(e_{0},e\right)|\left(e_{0},e\right)\in\mathbb{R}\times\mathbb{R}^{3}=\mathbb{R}^{4}\right\} 
\]
(recall from Lemma \ref{lem:Hcompurules}, $Q\left(e_{0},e\right)^{*}=Q\left(e_{0},-e\right)=|(e_{0},e)|^{2}Q\left(e_{0},e\right)^{-1}$
if $(e_{0},e)\neq0$) which has the commutator as a Lie product (Lie
bracket). A derived construct which appears to be of interest is another
Lie subgroup of matrices
\[
\left\{ \exp\left(\alpha\,Q\left(e_{0},e\right)\right)|\left(e_{0},e\right)\in\mathbb{R}\times\mathbb{R}^{3}=\mathbb{R}^{4},\alpha\in\mathbb{R}\right\} .
\]
Since
\begin{align*}
Q\left(e_{0},e\right)^{2} & =Q\left(e^{2}_{0}-e^{\top}e,2e_{0}e\right)\\
Q\left(0,e\right)^{2} & =Q\left(-e^{\top}e,0\right)\\
 & =-e^{\top}e1_{4\times4}=-|e|^{2}1_{4\times4},
\end{align*}
we get, with $\operatorname{sinc}(x)=\sin(x)/x$ and the power series
expansions of $\cos$ and $\sin$, 
\begin{align*}
\exp\left(\alpha Q\left(e_{0},e\right)\right) & =\exp\left(\alpha e_{0}\right)\exp\left(\alpha Q\left(0,e\right)\right)\\
 & =\exp\left(\alpha e_{0}\right)\sum^{\infty}_{k=0}\frac{\alpha^{k}}{k!}Q(0,e)^{k}\\
 & =\exp\left(\alpha e_{0}\right)\left(\sum^{\infty}_{k=0}\frac{(-1)^{k}(\alpha|e|)^{2k}}{(2k)!}1_{4\times4}+\sum^{\infty}_{k=0}\frac{(-1)^{k}\alpha^{2k+1}|e|^{2k}}{(2k+1)!}Q(0,e)\right)\\
 & =\exp(\alpha e_{0})\left(\cos(\alpha|e|)1_{4\times4}+\frac{1}{|e|}\sin(\alpha|e|)Q(0,e)\right)\\
 & =\exp\left(\alpha e_{0}\right)Q\left(\cos\left(\alpha|e|\right),\sinc\left(\alpha|e|\right)\alpha e\right).
\end{align*}
So, not surprisingly, the result is just another quaternion. In particular,
for $|e|=1$ we obtain
\[
\exp(\alpha Q(0,e))=Q(\cos(\alpha),\sin(\alpha)e)
\]
and hence, 
\[
\left\{ \exp\left(\alpha\,Q\left(0,e\right)\right)|e\in\mathbb{R}^{3},|e|=1,\,\alpha\in\left[0,\pi\right]\right\} =\left\{ Q\left(E\right)|E\in\mathbb{R}^{4},|E|=1\right\} .
\]
Indeed, that $\left\{ \exp\left(\alpha\,Q\left(0,e\right)\right)|e\in\mathbb{R}^{3},\left|e\right|=1,\,\alpha\in\left[0,\pi\right]\right\} \subseteq\left\{ Q\left(E\right)|E\in\mathbb{R}^{4},\left|E\right|=1\right\} $
is obvious. Conversely, if $F=\left(\begin{array}{c}
f_{0}\\
f
\end{array}\right)$ with $|F|=1$ then $f_{0}\in[-1,1]$. So, there is an $\alpha\in[0,\pi]$
such that
\[
f_{0}=\cos\left(\alpha\right).
\]
then 
\[
|f|^{2}=|F|^{2}-f^{2}_{0}=\left(\sin\left(\alpha\right)\right)^{2}.
\]
For $\alpha\not\in\left\{ 0,\pi\right\} $ we see that
\begin{equation}
e=\frac{1}{\sin\left(\alpha\right)}f\label{eq:esinalpha}
\end{equation}
has norm $1$. Then 
\[
Q\left(F\right)=Q\left(\cos\left(\alpha\right),\sin\left(\alpha\right)e\right)=\exp\left(\alpha\,Q\left(0,e\right)\right).
\]
If $\alpha\in\{0,\pi\}$ we have $f=0$ and $f_{0}=\pm1$. In this
case we have either
\[
Q\left(F\right)=1=\exp\left(0\,Q\left(0,e\right)\right)\text{ or }Q\left(F\right)=-1=\exp\left(\pi\,Q\left(0,e\right)\right),
\]
for any normalised $e$.

Since for $E=(e_{0},e)\in\mathbb{R}^{4}$
\begin{align}
Q\left(E\right)^{*}Q\left(E\right) & =\left(e_{0}-Q\left(0,e\right)\right)\left(e_{0}+Q\left(0,e\right)\right)\nonumber \\
 & =e^{2}_{0}-Q\left(0,e\right)^{2}\label{eq:unita}\\
 & =e^{2}_{0}+|e|^{2}\nonumber \\
 & =|E|^{2},\nonumber 
\end{align}
we see that
\[
\left\{ Q\left(E\right)|E\in\mathbb{R}^{4},|E|=1\right\} 
\]
is a set of (real) unitary matrices. Moreover, in general,
\begin{align*}
\left(Q\left(E\right)Q\left(F\right)\right)^{\top}Q\left(E\right)Q\left(F\right) & =Q\left(F\right)^{\top}Q\left(E\right)^{\top}Q\left(E\right)Q\left(F\right)\\
 & =|E|^{2}|F|^{2}
\end{align*}
 and hence, if
\[
Q\left(E\right)Q\left(F\right)=Q\left(Z\right)
\]
with $|E|=|F|=1$, then $|Z|=1$, and hence, there is a $z\in\mathbb{R}^{3}$,
$|z|=1$, and an $\alpha\in\left[0,\pi\right]$ such that $Q\left(Z\right)=Q\left(\cos\left(\alpha\right),\sin\left(\alpha\right)z\right)=\exp\left(\alpha Q\left(0,z\right)\right)$.
This gives 
\[
\exp\left(\beta\,Q\left(0,e\right)\right)\exp\left(\gamma\,Q\left(0,f\right)\right)=\exp\left(Q\left(0,\beta e\right)\right)\exp\left(Q\left(0,\gamma f\right)\right)=\exp\left(\alpha\,Q\left(0,z\right)\right)=\exp\left(Q\left(0,\alpha z\right)\right).
\]
Note that we have an explicit construction of $\alpha$ and $z$ (see
\eqref{eq:esinalpha}), in effect conveniently bypassing the usual
considerations via series of higher commutators by the Baker--Campell--Hausdorff
formula. From \eqref{eq:unita} we have in particular that $\left\{ Q\left(E\right)|E\in\mathbb{R}^{4},|E|=1\right\} $
is a subgroup of unitary operators in the Hilbert space $\mathbb{H}$.
From this observation one realises that exponentials of pure quaternions
are not really more particular as exponentials of general quaternions.
\end{rem}

\section{The Dirac Equation including the mass term}

As we have seen the extended Maxwell operator is indeed unitarily
congruent to the classical Dirac operator for massless particles.
The operator has the form
\[
\mathcal{A}_{\mathrm{eMax}}=\left(\begin{array}{cc}
0 & -Q\left(0,\nabla\right)\\
Q\left(0,\nabla\right) & 0
\end{array}\right),
\]
thus only making use of the standard quaternion representation $\mathbb{H}$
introduced above.

Remarkably this reconciles Maxwell's quaternionic approach to electrodynamics
with Heaviside's and Gibbs' perspective, which reshaped Maxwell's
quaternion setting in such an unfortunate way to eventually force
Dirac to reinvent biquaternions under different names to arrive at
his factorisation of the Klein--Gordon equation.

Recalling $e_{1}=(1,0,0)$, let us inspect
\[
Q^{\prime}\left(0,e_{1}\right)=\left(\begin{array}{cc}
0 & -e^{\top}_{1}\\
e_{1} & e_{1}\times
\end{array}\right)=\left(\begin{array}{cc}
\left(\begin{array}{cc}
0 & -1\\
1 & 0
\end{array}\right) & \left(\begin{array}{cc}
0 & 0\\
0 & 0
\end{array}\right)\\
\left(\begin{array}{cc}
0 & 0\\
0 & 0
\end{array}\right) & \left(\begin{array}{cc}
0 & -1\\
1 & 0
\end{array}\right)
\end{array}\right)=\left(\begin{array}{cc}
\ii & 0\\
0 & \ii
\end{array}\right).
\]
Apparently, $Q^{\prime}\left(0,e_{1}\right)$ is simply the multiplication
by the imaginary unit $\ii$, which we introduced in the above to
bring the operator even closer to the more common formulation of the
Dirac operator. We have indeed
\begin{align*}
 & \left(\begin{array}{cc}
1 & 0\\
0 & Q^{\prime}\left(0,e_{1}\right)
\end{array}\right)\left(\begin{array}{cc}
0 & -Q\left(0,\nabla\right)\\
Q\left(0,\nabla\right) & 0
\end{array}\right)\left(\begin{array}{cc}
1 & 0\\
0 & Q^{\prime}\left(0,e_{1}\right)
\end{array}\right)^{*}\\
 & =\left(\begin{array}{cc}
0 & Q\left(0,\nabla\right)Q^{\prime}\left(0,e_{1}\right)\\
Q^{\prime}\left(0,e_{1}\right)Q\left(0,\nabla\right) & 0
\end{array}\right)\\
 & =\left(\begin{array}{cc}
0 & Q^{\prime}\left(0,e_{1}\right)Q\left(0,\nabla\right)\\
Q^{\prime}\left(0,e_{1}\right)Q\left(0,\nabla\right) & 0
\end{array}\right),
\end{align*}
where we have used \prettyref{prop:commute_Q_Q'} and hence,
\[
\mathcal{A}_{\mathrm{eMax}}\rightleftharpoons\left(\begin{array}{cc}
0 & Q^{\prime}\left(0,e_{1}\right)Q\left(0,\nabla\right)\\
Q^{\prime}\left(0,e_{1}\right)Q\left(0,\nabla\right) & 0
\end{array}\right)
\]
where the operator on the right-hand side provides the classical formulation
of Dirac's operator for the case of vanishing mass in purely real
terms\footnote{It seems that the use of complex numbers occasionally tends to obscure
the actually underlying structure. An example may be the Schrödinger
operator $\partial_{0}+\ii\,\Delta$, which is in real terms just
a first order version of a 3-dimensional Kirchhoff plate operator
$\partial^{2}_{0}+\Delta^{2}=\left(\partial_{0}+\ii\Delta\right)\left(\partial_{0}-\ii\Delta\right).$
Indeed, separating real and imaginary part in the Schrödinger equation
yields the system
\[
\left(\partial_{0}+\left(\begin{array}{cc}
0 & -\Delta\\
\Delta & 0
\end{array}\right)\right)\left(\begin{array}{c}
u\\
v
\end{array}\right)=\left(\begin{array}{c}
f\\
g
\end{array}\right).
\]
Eliminating $v$ from this, we get
\begin{align*}
\partial^{2}_{0}u-\Delta\left(-\Delta u+g\right) & =\partial_{0}f\\
\left(\partial^{2}_{0}+\Delta^{2}\right)u & =\partial_{0}f+\Delta g,
\end{align*}
which is indeed a 3-dimensional version of the Kirchhoff plate equation.}. 
\begin{lem}
\label{lem:orthogonal}Let $R\in\mathrm{SO}(3)$ ; i.e., $RR^{\top}=R^{\top}R=1_{3\times3}$
and $\det R=1$. Then 
\[
\left(\begin{array}{cc}
1 & 0\\
0 & R
\end{array}\right)Q^{\prime}\left(0,f\right)\left(\begin{array}{cc}
1 & 0\\
0 & R
\end{array}\right)^{\top}=Q^{\prime}\left(0,Rf\right),\left(\begin{array}{cc}
1 & 0\\
0 & R
\end{array}\right)Q\left(0,f\right)\left(\begin{array}{cc}
1 & 0\\
0 & R
\end{array}\right)^{\top}=Q\left(0,Rf\right)
\]
for each $f\in\R^{3}$.
\end{lem}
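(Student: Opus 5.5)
The plan is a direct block-matrix computation. Write $\widehat R\coloneqq\left(\begin{array}{cc}1 & 0\\0 & R\end{array}\right)$. Since $R^{\top}R=1_{3\times3}$, the matrix $\widehat R$ is orthogonal and $\widehat R^{\top}=\left(\begin{array}{cc}1 & 0\\0 & R^{\top}\end{array}\right)$. Multiplying out the blocks of $Q(0,f)=\left(\begin{array}{cc}0 & f^{\top}\\-f & f\times\end{array}\right)$ gives
\[
\widehat R\,Q(0,f)\,\widehat R^{\top}=\left(\begin{array}{cc}0 & f^{\top}R^{\top}\\-Rf & R(f\times)R^{\top}\end{array}\right).
\]
Here $f^{\top}R^{\top}=(Rf)^{\top}$ and $-Rf=-(Rf)$. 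Hence the first claim reduces to the identity
\[
R(f\times)R^{\top}=(Rf)\times .
\]
The computation for $Q'(0,f)$ is the same: only the signs of the off-diagonal blocks change, and the lower-right block is again $R(f\times)R^{\top}$. Alternatively, use $Q'=\sigma Q\sigma$ and note that $\sigma$ commutes with $\widehat R$, because both are block diagonal and $\sigma$ is scalar on each block.

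The main step is proving $R(f\times)R^{\top}=(Rf)\times$. This is where $\det R=1$ is needed; for $\det R=-1$ a sign flips. Apply both sides to an arbitrary $g\in\R^{3}$ and write $g=Rh$ with $h=R^{\top}g$. Then the identity becomes
\[
R(f\times h)=(Rf)\times(Rh),
\]
that is, rotations preserve the cross product. I would show this via the scalar triple product. For any $k\in\R^{3}$,
\[
k^{\top}\bigl((Rf)\times(Rh)\bigr)=\det(k,Rf,Rh)=\det\bigl(R\,(R^{\top}k,f,h)\bigr)=\det R\cdot\det(R^{\top}k,f,h)=(R^{\top}k)^{\top}(f\times h)=k^{\top}R(f\times h).
\]
Since $k$ is arbitrary, the two vectors agree.

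With this identity in hand, both displayed formulas of the lemma follow immediately from the block computation above, completing the proof.
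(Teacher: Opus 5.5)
Your proposal is correct and takes essentially the same approach as the paper: block multiplication reduces both identities to $R(f\times)R^{\top}=(Rf)\times$, which is then verified via the determinant (triple-product) form of the cross product together with $\det R=1$. The differences are cosmetic. The paper tests against the vectors $R^{\top}e_{i}$ rather than an arbitrary $k$, and it computes the $Q'$ case explicitly, leaving $Q$ to ``the same lines''; you do $Q$ and obtain $Q'$ by conjugating with $\sigma$, which commutes with $\diag(1,R)$.
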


\begin{proof}
For the first equality we compute
\begin{align*}
\left(\begin{array}{cc}
1 & 0\\
0 & R
\end{array}\right)Q^{\prime}\left(0,f\right)\left(\begin{array}{cc}
1 & 0\\
0 & R^{\top}
\end{array}\right) & =\left(\begin{array}{cc}
1 & 0\\
0 & R
\end{array}\right)\left(\begin{array}{cc}
0 & -f^{\top}\\
f & f\times
\end{array}\right)\left(\begin{array}{cc}
1 & 0\\
0 & R^{\top}
\end{array}\right)\\
 & =\left(\begin{array}{cc}
0 & -f^{\top}R^{\top}\\
Rf & R(f\times)R^{\top}
\end{array}\right)\\
 & =\left(\begin{array}{cc}
0 & -\left(Rf\right)^{\top}\\
Rf & R(f\times)R^{\top}
\end{array}\right).
\end{align*}
Now for $g\in\R^{3}$ we compute 
\[
R(f\times g)=\left(\begin{array}{c}
\langle R^{\top}e_{1},f\times g\rangle\\
\langle R^{\top}e_{2},f\times g\rangle\\
\langle R^{\top}e_{3},f\times g\rangle
\end{array}\right)=\left(\begin{array}{c}
\det(R^{\top}e_{1}\,f\,g)\\
\det(R^{\top}e_{2}\,f\,g)\\
\det(R^{\top}e_{3}\,f\,g)
\end{array}\right)=\left(\det R^{\top}\right)\left(\begin{array}{c}
\det(e_{1}\,Rf\,Rg)\\
\det(e_{2}\,Rf\,Rg)\\
\det(e_{3}\,Rf\,Rg)
\end{array}\right)=(Rf\times Rg)
\]
and hence, $R(f\times)R^{\top}=(Rf)\times.$ Hence, the first formula
follows. The rationale for $Q$ instead of $Q'$ follows the same
lines.
\end{proof}

\begin{cor}
\label{cor:orthogonal_derivative}Let $R\in\mathrm{SO}(3)$. Then
\[
\left(\begin{array}{cc}
1 & 0\\
0 & R
\end{array}\right)Q\left(0,\nabla\right)\left(\begin{array}{cc}
1 & 0\\
0 & R^{\top}
\end{array}\right)=Q\left(0,R\nabla\right).
\]
\end{cor}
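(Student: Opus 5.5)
The corollary follows from Lemma \ref{lem:orthogonal}, applied pointwise in Fourier space or, equivalently, by linearity in the formal symbol $\nabla$. First I make precise what $Q(0,R\nabla)$ means. Since $Q(0,\cdot)$ is linear in its vector argument, we have
\[
Q(0,\nabla)=\sum_{k=1}^{3}\partial_{k}\,Q(0,e_{k}),
\]
where each $Q(0,e_{k})$ is a constant real $4\times4$ matrix. Accordingly, $R\nabla$ is the vector of operators with components $(R\nabla)_{j}=\sum_{k}R_{jk}\partial_{k}$, and
\[
Q(0,R\nabla)=\sum_{j=1}^{3}(R\nabla)_{j}\,Q(0,e_{j})=\sum_{k=1}^{3}\partial_{k}\,Q(0,Re_{k}),
\]
where the last equality again uses linearity of $Q(0,\cdot)$ together with $Re_{k}=\sum_{j}R_{jk}e_{j}$.

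The proof then consists of the following steps.
\begin{enumerate}
\item The block matrix $\left(\begin{array}{cc}1 & 0\\0 & R\end{array}\right)$ is a constant matrix, so it commutes with each scalar operator $\partial_{k}$ acting componentwise.
\item Conjugating the sum term by term therefore gives
\[
\left(\begin{array}{cc}1 & 0\\0 & R\end{array}\right)Q(0,\nabla)\left(\begin{array}{cc}1 & 0\\0 & R^{\top}\end{array}\right)=\sum_{k=1}^{3}\partial_{k}\left(\begin{array}{cc}1 & 0\\0 & R\end{array}\right)Q(0,e_{k})\left(\begin{array}{cc}1 & 0\\0 & R\end{array}\right)^{\top}.
\]
\item By Lemma \ref{lem:orthogonal} applied with $f=e_{k}$, each summand equals $\partial_{k}Q(0,Re_{k})$.
\item Summing over $k$ and using the identity from the previous paragraph yields $Q(0,R\nabla)$.
\end{enumerate}

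As a cross-check, one can pass to Fourier variables. There $\nabla$ becomes multiplication by $\ii\xi$, and the claim reduces pointwise to Lemma \ref{lem:orthogonal} with $f=\xi$, using $R(\ii\xi)=\ii R\xi$. The same argument also shows that domains are preserved: the left-hand side is defined on the image of $\dom(Q(0,\nabla))$ under the bounded bijection $\diag(1,R)$.

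The only real obstacle is interpretive rather than computational. One has to fix the meaning of $Q(0,R\nabla)$ and justify applying Lemma \ref{lem:orthogonal}, which is stated for real vectors $f$, to a vector of commuting operators. The linearity decomposition above handles this, since the lemma is only ever used on the honest vectors $e_{k}$.
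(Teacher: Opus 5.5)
Your proof is correct, but your main argument takes a somewhat different route from the paper's.

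The paper works on the Fourier side. For smooth $u$ it shows that the conjugated operator acts on $\mathcal{F}u$ as multiplication by $\ii\,\diag(1,R)\,Q(0,\xi)\,\diag(1,R^{\top})$. It then applies Lemma~\ref{lem:orthogonal} pointwise with $f=\xi$, and moves the factor $\ii$ back inside via $\ii\,Q(0,R\xi)=Q(0,R\,\ii\xi)$ to recognise the symbol of $Q(0,R\nabla)$. This is exactly what you list as your cross-check.

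Your primary argument uses only the real linearity of $e\mapsto Q(0,e)$. You write $Q(0,\nabla)=\sum_{k}\partial_{k}Q(0,e_{k})$ and $Q(0,R\nabla)=\sum_{k}\partial_{k}Q(0,Re_{k})$, so the lemma is only ever invoked on the three real vectors $e_{1},e_{2},e_{3}$. This buys you three things:
\begin{enumerate}
\item You avoid the Fourier transform and the tacit complex-linear extension of $Q(0,\cdot)$ to arguments such as $\ii\xi$.
\item You give an explicit meaning to $Q(0,R\nabla)$, which the paper leaves implicit.
\item You obtain an identity of constant-coefficient differential expressions valid for all distributions. This is what actually justifies your remark that the maximal domains coincide, whereas the paper's computation is stated only for smooth $u$.
\end{enumerate}

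The paper's Fourier route, in turn, fits its context. The same device, $\mathcal{F}\chi_{R}=\chi_{R}\mathcal{F}$, is used in the next proposition to undo the rotated gradient. On the Fourier side the operator identity is simply an identity of matrix-valued multipliers.
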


\begin{proof}
The latter follows from \prettyref{lem:orthogonal} by applying the
Fourier transform $\mathcal{F}$. Indeed, for $u$ smooth 
\begin{align*}
\left(\mathcal{F}\left(\begin{array}{cc}
1 & 0\\
0 & R
\end{array}\right)Q\left(0,\nabla\right)\left(\begin{array}{cc}
1 & 0\\
0 & R^{\top}
\end{array}\right)u\right)(\xi) & =\left(\begin{array}{cc}
1 & 0\\
0 & R
\end{array}\right)\left(\mathcal{F}\left(\begin{array}{cc}
0 & {\nabla}^{\top}\\
-\nabla & {\nabla}\times
\end{array}\right)\left(\begin{array}{cc}
1 & 0\\
0 & R^{\top}
\end{array}\right)u\right)(\xi)\\
 & =\left(\begin{array}{cc}
1 & 0\\
0 & R
\end{array}\right)\left(\begin{array}{cc}
0 & {\i\xi}^{\top}\\
-\i\xi & {\i\xi}\times
\end{array}\right)\left(\begin{array}{cc}
1 & 0\\
0 & R^{\top}
\end{array}\right)\left(\mathcal{F}u\right)(\xi)\\
 & =\i\left(\begin{array}{cc}
1 & 0\\
0 & R
\end{array}\right)Q(0,\xi)\left(\begin{array}{cc}
1 & 0\\
0 & R^{\top}
\end{array}\right)\left(\mathcal{F}u\right)(\xi)\\
 & =\i Q(0,R\xi)\left(\mathcal{F}u\right)(\xi)\\
 & =Q(0,R\i\xi)\left(\mathcal{F}u\right)(\xi)=\mathcal{F}(Q(0,R\nabla)u)(\xi)
\end{align*}
for each $\xi\in\R^{3}$, which yields the claim. 
\end{proof}

\begin{defn*}
For $A\in\mathbb{R}^{3\times3}$ invertible we define the transformation
for any $m\in\mathbb{N}_{>0}$
\begin{align*}
\chi_{A}:L^{2}\left(\mathbb{R}^{3},\mathbb{R}^{m}\right) & \to L^{2}\left(\mathbb{R}^{3},\mathbb{R}^{m}\right)\\
u & \mapsto\sqrt{\left|\det\left(A\right)\right|}\:u\circ A.
\end{align*}
\end{defn*}
\begin{rem}
Note that $\chi_{A}$ is unitary, since 
\[
\int_{\R^{3}}|\left(\chi_{A}u\right)(x)|^{2}\d x=\int_{\R^{3}}|\det A||u(Ax)|^{2}\d x=\int_{\R^{3}}|u(x)|^{2}\d x
\]
for $u\in L^{2}(\R^{3};\R^{m})$ and its inverse is clearly given
by $\chi_{A^{-1}}.$ 
\end{rem}

\begin{prop}
Let $e\in\R^{3}$ be normalised. Then 
\[
\mathcal{A}_{\mathrm{eMax}}\rightleftharpoons\left(\begin{array}{cc}
0 & Q^{\prime}\left(0,e\right)Q\left(0,\nabla\right)\\
Q^{\prime}\left(0,e\right)Q\left(0,\nabla\right) & 0
\end{array}\right).
\]
\end{prop}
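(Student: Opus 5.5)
The plan is to reduce to the already established case $e=e_{1}$ and transport it by a rotation. Just before Lemma \ref{lem:orthogonal} we showed
\[
\mathcal{A}_{\mathrm{eMax}}\rightleftharpoons\left(\begin{array}{cc}
0 & Q^{\prime}\left(0,e_{1}\right)Q\left(0,\nabla\right)\\
Q^{\prime}\left(0,e_{1}\right)Q\left(0,\nabla\right) & 0
\end{array}\right).
\]
Since $\rightleftharpoons$ is transitive (compose the unitaries), it suffices to show that the operator with $e_{1}$ is unitarily congruent to the one with $e$.

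First choose $R\in\mathrm{SO}(3)$ with $Re_{1}=e$. To do this, complete $e$ to a right-handed orthonormal basis $(e,f,e\times f)$ and let $R$ have these vectors as columns. Put $\tilde R\coloneqq\left(\begin{array}{cc}1&0\\0&R\end{array}\right)$. By Lemma \ref{lem:orthogonal}, $\tilde RQ^{\prime}(0,e_{1})\tilde R^{\top}=Q^{\prime}(0,e)$. By Corollary \ref{cor:orthogonal_derivative}, $\tilde RQ(0,\nabla)\tilde R^{\top}=Q(0,R\nabla)$. Hence, conjugating with $\diag(\tilde R,\tilde R)$ (acting pointwise, so unitary on $L^{2}(\R^{3},\R^{4})^{2}$) transforms the $e_{1}$-operator into
\[
\left(\begin{array}{cc}
0 & Q^{\prime}(0,e)Q(0,R\nabla)\\
Q^{\prime}(0,e)Q(0,R\nabla) & 0
\end{array}\right).
\]

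It remains to remove the rotation from the derivative. For this I use the unitary coordinate change $\chi_{A}$ with a suitable $A\in\{R,R^{\top}\}$. By the chain rule, $\nabla(u\circ A)=A^{\top}(\nabla u)\circ A$. With $A=R$ this gives $\chi_{R}^{*}\,Q(0,\nabla)\,\chi_{R}=\chi_{R^{\top}}Q(0,\nabla)\chi_{R}=Q(0,R\nabla)$, up to fixing which of $R$, $R^{\top}$ is needed. The cleanest check is on the Fourier side: $\mathcal{F}\chi_{R}=\chi_{R}\mathcal{F}$ for orthogonal $R$, so the identity reduces to $Q(0,\i R\xi)$ versus $Q(0,\i\xi)$ composed with $\xi\mapsto R\xi$. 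Since $\chi_{R}$ commutes with the constant matrix $Q^{\prime}(0,e)$, conjugating with $\diag(\chi_{R},\chi_{R})$ turns $Q^{\prime}(0,e)Q(0,R\nabla)$ into $Q^{\prime}(0,e)Q(0,\nabla)$, which is the claim.

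The main obstacle is getting the direction of the rotation right in this coordinate change, that is, whether $\chi_{R}$ or $\chi_{R^{\top}}$ turns $Q(0,R\nabla)$ back into $Q(0,\nabla)$. Both are unitary, so either choice closes the argument once the chain-rule identity is written out. One must also make sure that domains are preserved: all unitaries involved map the maximal domain onto the maximal domain, which follows from the same identities in the distributional sense.
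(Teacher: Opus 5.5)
Your proposal is correct and follows the paper's proof essentially step for step. You reduce to $e=e_{1}$, conjugate with $\diag(\tilde R,\tilde R)$ using Lemma~\ref{lem:orthogonal} and Corollary~\ref{cor:orthogonal_derivative}, and then remove the rotation from the derivative with $\chi_{R}$ via the Fourier transform, using that $\chi_{R}$ commutes with $Q^{\prime}(0,e)$. The one slip is that your displayed identity is off by a transpose: the chain rule gives $\chi_{R}^{*}Q(0,\nabla)\chi_{R}=Q(0,R^{\top}\nabla)$. The identity actually needed, the paper's \eqref{eq:transform_Q}, is $\chi_{R^{\top}}Q(0,R\nabla)\chi_{R}=Q(0,\nabla)$, and this is exactly what your final conjugation step uses, so the argument closes as you anticipated.
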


\begin{proof}
We have already seen that this congruence holds for $e=e_{1}.$ Now
we choose a matrix $R\in\mathrm{SO}(3)$ such that $e=Re_{1}$. By
\prettyref{lem:orthogonal} and \prettyref{cor:orthogonal_derivative}
we infer 
\begin{align*}
 & \left(\begin{array}{cc}
0 & Q^{\prime}\left(0,e_{1}\right)Q\left(0,\nabla\right)\\
Q^{\prime}\left(0,e_{1}\right)Q\left(0,\nabla\right) & 0
\end{array}\right)\\
 & \rightleftharpoons\left(\begin{array}{cc}
\left(\begin{array}{cc}
1 & 0\\
0 & R
\end{array}\right) & 0\\
0 & \left(\begin{array}{cc}
1 & 0\\
0 & R
\end{array}\right)
\end{array}\right)\left(\begin{array}{cc}
0 & Q^{\prime}\left(0,e_{1}\right)Q\left(0,\nabla\right)\\
Q^{\prime}\left(0,e_{1}\right)Q\left(0,\nabla\right) & 0
\end{array}\right)\left(\begin{array}{cc}
\left(\begin{array}{cc}
1 & 0\\
0 & R^{\top}
\end{array}\right) & 0\\
0 & \left(\begin{array}{cc}
1 & 0\\
0 & R^{\top}
\end{array}\right)
\end{array}\right)\\
 & \rightleftharpoons\left(\begin{array}{cc}
0 & Q^{\prime}\left(0,e\right)Q\left(0,R\nabla\right)\\
Q^{\prime}\left(0,e\right)Q\left(0,R\nabla\right) & 0
\end{array}\right).
\end{align*}
Next we show 
\begin{equation}
\chi_{R^{\top}}Q(0,R\nabla)\chi_{R}=Q(0,\nabla)\label{eq:transform_Q}
\end{equation}
For doing so, we again apply the Fourier transform $\mathcal{F}$
for a smooth function $u$. First we observe that 
\[
\left(\mathcal{F}\chi_{R}u\right)(\xi)=\frac{1}{(2\pi)^{\frac{3}{2}}}\int_{\R^{3}}\e^{-\i\xi\cdot x}u(Rx)\d x=\frac{1}{(2\pi)^{\frac{3}{2}}}\int_{\R^{3}}\e^{-\i\xi\cdot R^{\top}x}u(x)\d x=\left(\mathcal{F}u\right)(R\xi)
\]
and hence, $\mathcal{F}\chi_{R}=\chi_{R}\mathcal{F}.$ This yields
\begin{align*}
\left(\mathcal{F}\chi_{R^{\top}}Q(0,R\nabla)\chi_{R}u\right)(\xi) & =\left(\mathcal{F}Q(0,R\nabla)\chi_{R}u\right)(R^{\top}\xi)\\
 & =Q(0,R\i R^{\top}\xi)\left(\mathcal{F}\chi_{R}u\right)(R^{\top}\xi)\\
 & =Q(0,\i\xi)\left(\mathcal{F}u\right)(\xi)\\
 & =\mathcal{F}(Q(0,\nabla)u)(\xi)
\end{align*}
and hence, \prettyref{eq:transform_Q} follows. Since clearly $\chi_{R^{\top}}Q'(0,e)\chi_{R}=Q'(0,e)$,
we infer 
\[
\left(\begin{array}{cc}
0 & Q^{\prime}\left(0,e\right)Q\left(0,R\nabla\right)\\
Q^{\prime}\left(0,e\right)Q\left(0,R\nabla\right) & 0
\end{array}\right)\rightleftharpoons\left(\begin{array}{cc}
0 & Q^{\prime}\left(0,e\right)Q\left(0,\nabla\right)\\
Q^{\prime}\left(0,e\right)Q\left(0,\nabla\right) & 0
\end{array}\right).\qedhere
\]
\end{proof}

All in all we see that via unitary congruence the role of $e_{1}$
in the Dirac operator can be replaced by any normalised vector $e$
and we shall do so in the following, which by unitary congruence in
fact lead to the same physics. In fact, we also have an invariance
property that appears to be of interest.
\begin{prop}
\label{prop:UeInv}The operator
\[
\left(\begin{array}{cc}
0 & Q^{\prime}\left(0,e\right)Q\left(0,\nabla\right)\\
Q^{\prime}\left(0,e\right)Q\left(0,\nabla\right) & 0
\end{array}\right)
\]
is\textbf{\textup{ $U_{e}(1)$-invariant}}, that is, for all $\alpha\in\mathbb{R}$
it commutes with 
\[
\exp\left(\alpha\left(\begin{array}{cc}
Q^{\prime}\left(0,e\right) & 0\\
0 & Q^{\prime}\left(0,e\right)
\end{array}\right)\right).
\]
\end{prop}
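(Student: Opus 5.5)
The plan is to reduce the statement to a commutation relation between constant matrices, using the generator of the exponential.

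Write $J\coloneqq\begin{pmatrix}Q'(0,e)&0\\0&Q'(0,e)\end{pmatrix}$ and $D\coloneqq\begin{pmatrix}0&Q'(0,e)Q(0,\nabla)\\Q'(0,e)Q(0,\nabla)&0\end{pmatrix}$. It suffices to show that $J$ commutes with $D$, i.e.\ $JD\subseteq DJ$. Since $J$ is a bounded matrix multiplication operator, the identity $\exp(\alpha J)D\subseteq D\exp(\alpha J)$ then follows for every $\alpha\in\mathbb{R}$.

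To see this, I would use the closed form of the exponential rather than the series. Since $|e|=1$, \prettyref{prop:commute_Q_Q'} gives $Q'(0,e)^2=-1_{4\times4}$. Hence, exactly as in the remark on exponentials of quaternions,
\[
\exp(\alpha Q'(0,e))=\cos(\alpha)1_{4\times4}+\sin(\alpha)Q'(0,e),
\]
and therefore $\exp(\alpha J)=\cos(\alpha)1+\sin(\alpha)J$. So everything reduces to the single relation $JD=DJ$. Note also that $\exp(\alpha J)$ maps $\dom(D)$ into itself, because it acts pointwise as a constant matrix.

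For the commutation itself, I compute the block products. The off-diagonal entries of $JD$ are $Q'(0,e)Q'(0,e)Q(0,\nabla)$. The off-diagonal entries of $DJ$ are $Q'(0,e)Q(0,\nabla)Q'(0,e)$. Thus it suffices to show that $Q'(0,e)$ commutes with $Q(0,\nabla)$.

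\prettyref{prop:commute_Q_Q'} gives $Q'(0,e)Q(0,f)=Q(0,f)Q'(0,e)$ for all $f\in\mathbb{R}^3$. Since $Q(0,\cdot)$ is linear in its argument, $Q(0,\nabla)=\sum_j Q(0,e_j)\partial_j$, and each $\partial_j$ commutes with the constant matrix $Q'(0,e)$. So $Q'(0,e)Q(0,\nabla)=Q(0,\nabla)Q'(0,e)$, at least on smooth functions. Alternatively, this follows via the Fourier transform as in \prettyref{cor:orthogonal_derivative}, since the Fourier symbol is $Q(0,\i\xi)$.

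The only delicate point is the domain. I would handle it by working on the maximal domain defined through the Fourier transform (or distributionally). There, multiplication by the invertible constant matrix $Q'(0,e)$ preserves $\dom(Q(0,\nabla))$, because the symbol identity holds pointwise in $\xi$. This gives equality of the two products as operators, not merely on a core, and completes the proof.
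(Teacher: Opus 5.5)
Your proof is correct and follows the paper's route: you reduce $U_e(1)$-invariance to the commutation of the operator with the generator $\diag(Q'(0,e),Q'(0,e))$, and you verify that commutation via Proposition \ref{prop:commute_Q_Q'} (the paper only asserts it). Replacing the power series by the closed form $\cos(\alpha)+\sin(\alpha)J$ (valid since $|e|=1$) is a minor variation. It does have the small advantage that $\exp(\alpha J)$ becomes a polynomial in $J$, so no limit or closedness argument is needed for the domain.
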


\begin{proof}
The proof is a consequence of the power series expansion of $\exp$
and the fact that the operator considered commutes with $\left(\begin{array}{cc}
Q^{\prime}\left(0,e\right) & 0\\
0 & Q^{\prime}\left(0,e\right)
\end{array}\right)$ (and, hence, with all powers of it).
\end{proof}

\begin{rem}
\label{rem:Ue1-inv}(a) From the proof we see that only commutativity
with $Q^{\prime}\left(0,e\right)$ is needed to obtain the statement.
In fact, take $e,f\in\mathbb{R}^{3}$, then by Lemma \ref{lem:Hcompurules},
\begin{align*}
Q^{\prime}\left(0,e\right)Q^{\prime}\left(0,f\right)=Q^{\prime}\left(0,f\right)Q^{\prime}\left(0,e\right) & \iff Q^{\prime}\left(-e^{\top}f,e\times f\right)=Q^{\prime}\left(-f^{\top}e,f\times e\right)\\
 & \iff e\times f=f\times e\\
 & \iff e\times f=0\\
 & \iff e\parallel f.
\end{align*}
Thus, for normalised $e,f\in\mathbb{R}^{3}$, we obtain 
\[
\left(\begin{array}{cc}
0 & Q^{\prime}\left(0,e\right)Q\left(0,\nabla\right)\\
Q^{\prime}\left(0,e\right)Q\left(0,\nabla\right) & 0
\end{array}\right)\text{ is }U_{f}(1)\text{-invariant}\ \iff\ f=\pm e.
\]

(b) In physics literature one can find the notion $U(1)$-invariance,
which coincides with $U_{e_{1}}(1)$-invariance in \prettyref{prop:UeInv}.
Note that by (a), demanding for $U(1)$-invariance is asking the Dirac
operator to be of special form. Since the above rationale demonstrates
that changing $e$ does not change the physics, $U(1)$-invariance
here is biased towards $e_{1}$ only for aesthetic rather than for
physics reasons. The reason for this choice may be caused by the following
lucky accident: The $8\times8$ matrix 
\begin{align*}
\left(\begin{array}{cc}
Q^{\prime}\left(0,e_{1}\right) & 0\\
0 & Q^{\prime}\left(0,e_{1}\right)
\end{array}\right) & =\left(\begin{array}{cc}
\left(\begin{array}{cccc}
0 & -1 & 0 & 0\\
1 & 0 & 0 & 0\\
0 & 0 & 0 & -1\\
0 & 0 & 1 & 0
\end{array}\right) & \left(\begin{array}{cccc}
0 & 0 & 0 & 0\\
0 & 0 & 0 & 0\\
0 & 0 & 0 & 0\\
0 & 0 & 0 & 0
\end{array}\right)\\
\left(\begin{array}{cccc}
0 & 0 & 0 & 0\\
0 & 0 & 0 & 0\\
0 & 0 & 0 & 0\\
0 & 0 & 0 & 0
\end{array}\right) & \left(\begin{array}{cccc}
0 & -1 & 0 & 0\\
1 & 0 & 0 & 0\\
0 & 0 & 0 & -1\\
0 & 0 & 1 & 0
\end{array}\right)
\end{array}\right)\\
 & =\left(\begin{array}{cccccccc}
0 & -1 & 0 & 0 & 0 & 0 & 0 & 0\\
1 & 0 & 0 & 0 & 0 & 0 & 0 & 0\\
0 & 0 & 0 & -1 & 0 & 0 & 0 & 0\\
0 & 0 & 1 & 0 & 0 & 0 & 0 & 0\\
0 & 0 & 0 & 0 & 0 & -1 & 0 & 0\\
0 & 0 & 0 & 0 & 1 & 0 & 0 & 0\\
0 & 0 & 0 & 0 & 0 & 0 & 0 & -1\\
0 & 0 & 0 & 0 & 0 & 0 & 1 & 0
\end{array}\right)=\left(\begin{array}{cccc}
1 & 0 & 0 & 0\\
0 & 1 & 0 & 0\\
0 & 0 & 1 & 0\\
0 & 0 & 0 & 1
\end{array}\right)\otimes\left(\begin{array}{cc}
0 & -1\\
1 & 0
\end{array}\right)
\end{align*}
 is in $4\times4$ complex notation
\[
\left(\begin{array}{cccc}
\ii & 0 & 0 & 0\\
0 & \ii & 0 & 0\\
0 & 0 & \ii & 0\\
0 & 0 & 0 & \ii
\end{array}\right)=\left(\begin{array}{cc}
\left(\begin{array}{cc}
\ii & 0\\
0 & \ii
\end{array}\right) & \left(\begin{array}{cc}
0 & 0\\
0 & 0
\end{array}\right)\\
\left(\begin{array}{cc}
0 & 0\\
0 & 0
\end{array}\right) & \left(\begin{array}{cc}
\ii & 0\\
0 & \ii
\end{array}\right)
\end{array}\right)=\ii\left(\begin{array}{cccc}
1 & 0 & 0 & 0\\
0 & 1 & 0 & 0\\
0 & 0 & 1 & 0\\
0 & 0 & 0 & 1
\end{array}\right),
\]
and so it just looks like multiplication by $\ii$. Thus, in this
framework, a more direct way of asking for $U(1)$-invariance would
be asking for `complex linearity'. The drawback is that
\begin{align*}
Q^{\prime}\left(0,e_{2}\right)= & \left(\begin{array}{cccc}
0 & 0 & -1 & 0\\
0 & 0 & 0 & 1\\
1 & 0 & 0 & 0\\
0 & -1 & 0 & 0
\end{array}\right)=\left(\begin{array}{cc}
0 & -\conj\\
\conj & 0
\end{array}\right),\\
Q^{\prime}\left(0,e_{3}\right)= & \left(\begin{array}{cccc}
0 & 0 & 0 & -1\\
0 & 0 & -1 & 0\\
0 & 1 & 0 & 0\\
1 & 0 & 0 & 0
\end{array}\right)=\left(\begin{array}{cc}
0 & -\ii\conj\\
\ii\conj & 0
\end{array}\right),
\end{align*}
are not commuting with $\ii$. Note that $\conj$ and $\ii\conj$
are both selfadjoint, whereas multiplication by $\ii$ is skew-selfadjoint.
\end{rem}

We choose not to single out any particular normalised $e\in\mathbb{R}^{3}$
in the following. In consequence, we have
\[
\partial_{0}+\left(\begin{array}{cc}
0 & Q^{\prime}\left(0,e\right)Q\left(0,\nabla\right)\\
Q^{\prime}\left(0,e\right)Q\left(0,\nabla\right) & 0
\end{array}\right)
\]
as our standard form of the \textbf{Dirac operator} in the case of\textbf{
mass zero}. The actual novelty, if quaternions had not been largely
dismissed from the development of the physics of electromagnetic waves,
that Dirac brought into the discussion was to include an extra term
to factorise $\Delta-m^{2}$ rather than just $\Delta$. It has been
well known that block operator matrices of the form $A=\left(\begin{array}{cc}
0 & -C^{*}\\
C & 0
\end{array}\right)$ are, by the matching block form operator 
\[
\left(\begin{array}{cc}
1 & 0\\
0 & -1
\end{array}\right),
\]
turned\footnote{Actually, this argument works for general $A=\left(\begin{array}{cc}
0 & D\\
C & 0
\end{array}\right)$, but we focus on the case $A$ skew-selfadjoint, since this guarantees
that solutions of 
\[
\left(\partial_{0}+A\right)u=f
\]
satisfies
\[
t\mapsto\frac{1}{2}\left|u\left(t\right)\right|^{2}\text{ constant on }\mathbb{R}_{>0},
\]
which characterises wave phenomena. The transformation of multiplying
by $\left(\begin{array}{cc}
1 & 0\\
0 & -1
\end{array}\right)$ is sometimes referred to as ``time-reversal'', since its effect
may formally be considered as resulting from reversal of time, i.e.
the variable change $t\mapsto-t$.} into $-A$:
\begin{equation}
\left(\begin{array}{cc}
1 & 0\\
0 & -1
\end{array}\right)\left(\begin{array}{cc}
0 & -C^{*}\\
C & 0
\end{array}\right)\left(\begin{array}{cc}
1 & 0\\
0 & -1
\end{array}\right)=-\left(\begin{array}{cc}
0 & -C^{*}\\
C & 0
\end{array}\right).\label{eq:t-}
\end{equation}
In the present case we have $C=-C^{*}=Q^{\prime}\left(0,e\right)Q\left(0,\nabla\right)$.
To maintain skew-selfadjointness of the sum we would need to consider
something like
\begin{align*}
 & \left(\begin{array}{cc}
mQ^{\prime}\left(0,e\right) & Q^{\prime}\left(0,e\right)Q\left(0,\nabla\right)\\
Q^{\prime}\left(0,e\right)Q\left(0,\nabla\right) & -mQ^{\prime}\left(0,e\right)
\end{array}\right)\\
 & =\left(\begin{array}{cc}
0 & Q^{\prime}\left(0,e\right)Q\left(0,\nabla\right)\\
Q^{\prime}\left(0,e\right)Q\left(0,\nabla\right) & 0
\end{array}\right)+m\left(\begin{array}{cc}
Q^{\prime}\left(0,e\right) & 0\\
0 & -Q^{\prime}\left(0,e\right)
\end{array}\right),
\end{align*}
which for $e=e_{1}$ is indeed Dirac's suggestion, see also \prettyref{rem:Ue1-inv}.
Next, we confirm that this is indeed a factorisation of the Klein--Gordon
operator.
\begin{thm}
Let $m\in\mathbb{R}$, $e\in\mathbb{R}^{3}$ with $|e|=1$. Then
\begin{multline*}
\left(\partial_{0}-\left(\begin{array}{cc}
mQ^{\prime}\left(0,e\right) & Q^{\prime}\left(0,e\right)Q\left(0,\nabla\right)\\
Q^{\prime}\left(0,e\right)Q\left(0,\nabla\right) & -mQ^{\prime}\left(0,e\right)
\end{array}\right)\right)\left(\partial_{0}+\left(\begin{array}{cc}
mQ^{\prime}\left(0,e\right) & Q^{\prime}\left(0,e\right)Q\left(0,\nabla\right)\\
Q^{\prime}\left(0,e\right)Q\left(0,\nabla\right) & -mQ^{\prime}\left(0,e\right)
\end{array}\right)\right)\\
=\left(\partial^{2}_{0}-\Delta+m^{2}\right)1_{\mathbb{R}^{8}}.
\end{multline*}
\end{thm}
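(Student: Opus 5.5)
Write $M\coloneqq\left(\begin{array}{cc}
mQ^{\prime}(0,e) & Q^{\prime}(0,e)Q(0,\nabla)\\
Q^{\prime}(0,e)Q(0,\nabla) & -mQ^{\prime}(0,e)
\end{array}\right)$. Since $\partial_{0}$ commutes with the purely spatial, constant-coefficient operator $M$, the product $(\partial_{0}-M)(\partial_{0}+M)$ equals $\partial_{0}^{2}-M^{2}$. The cross terms $-M\partial_{0}+\partial_{0}M$ cancel, and it is enough to check this on smooth compactly supported functions, or after Fourier transform in space. The claim therefore reduces to the identity
\[
M^{2}=(\Delta-m^{2})1_{\mathbb{R}^{8}},
\]
and I will verify it by block multiplication.

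Abbreviate $J\coloneqq Q^{\prime}(0,e)$ and $D\coloneqq Q(0,\nabla)$. By \prettyref{prop:commute_Q_Q'}, $J$ commutes with $D$; the entries of $Q(0,\nabla)$ are $Q(0,\xi)$ evaluated at $\xi=\nabla$, and the commutation holds for each $\xi$ by linearity. The same proposition with $|e|=1$ gives $J^{2}=-1_{4\times4}$, and the remark in Section 1 gives $D^{2}=Q(0,\nabla)^{2}=-\Delta\,1_{4\times4}$. Squaring the block matrix then gives the following entries:
\begin{align*}
(1,1)&:\ m^{2}J^{2}+(JD)(JD)=-m^{2}+J^{2}D^{2}=-m^{2}+\Delta,\\
(1,2)&:\ mJ\,JD+JD(-mJ)=m(J^{2}D-JDJ)=0,\\
(2,1)&:\ JD\,mJ+(-mJ)JD=m(JDJ-J^{2}D)=0,\\
(2,2)&:\ (JD)(JD)+m^{2}J^{2}=\Delta-m^{2}.
\end{align*}
Here $(JD)^{2}=J^{2}D^{2}$ because $J$ and $D$ commute. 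Hence $M^{2}=(\Delta-m^{2})1_{\mathbb{R}^{8}}$, and so $\partial_{0}^{2}-M^{2}=(\partial_{0}^{2}-\Delta+m^{2})1_{\mathbb{R}^{8}}$, which is the assertion.

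The only delicate point is the step $(JD)^{2}=J^{2}D^{2}=\Delta$. It rests on the commutation of $Q^{\prime}$ with $Q$ from \prettyref{prop:commute_Q_Q'}, and it is also what makes the off-diagonal terms vanish: the opposite signs $\pm mJ$ on the diagonal are placed exactly so that $J^{2}D-JDJ=0$. I would also state that all identities are first verified on test functions, or symbolwise via the Fourier transform as in \prettyref{cor:orthogonal_derivative}, where $D$ becomes multiplication by $Q(0,\mathrm{i}\xi)$ and all manipulations are matrix identities. They then extend to the natural domains by closure.
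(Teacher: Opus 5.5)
Your proof is correct and follows essentially the same route as the paper, which also reduces the claim to $M^{2}=(\Delta-m^{2})1_{\mathbb{R}^{8}}$ using exactly the facts $Q^{\prime}(0,e)Q(0,\nabla)=Q(0,\nabla)Q^{\prime}(0,e)$, $Q^{\prime}(0,e)^{2}=-1_{\mathbb{R}^{4}}$ and $Q(0,\nabla)^{2}=-\Delta\,1_{\mathbb{R}^{4}}$. The only cosmetic difference is that the paper packages your vanishing $(1,2)$ and $(2,1)$ entries as the anticommutation of the massless off-diagonal part with the mass matrix $m\left(\begin{array}{cc}Q^{\prime}(0,e) & 0\\ 0 & -Q^{\prime}(0,e)\end{array}\right)$, so that the square of the sum is the sum of the squares.
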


\begin{proof}
Using that 
\begin{multline*}
\left(\begin{array}{cc}
0 & Q^{\prime}\left(0,e\right)Q\left(0,\nabla\right)\\
Q^{\prime}\left(0,e\right)Q\left(0,\nabla\right) & 0
\end{array}\right)\left(\begin{array}{cc}
Q^{\prime}\left(0,e\right) & 0\\
0 & -Q^{\prime}\left(0,e\right)
\end{array}\right)\\
=\left(\begin{array}{cc}
0 & -Q^{\prime}\left(0,e\right)^{2}Q\left(0,\nabla\right)\\
Q^{\prime}\left(0,e\right)^{2}Q\left(0,\nabla\right) & 0
\end{array}\right)
\end{multline*}
and 
\begin{multline*}
\left(\begin{array}{cc}
Q^{\prime}\left(0,e\right) & 0\\
0 & -Q^{\prime}\left(0,e\right)
\end{array}\right)\left(\begin{array}{cc}
0 & Q^{\prime}\left(0,e\right)Q\left(0,\nabla\right)\\
Q^{\prime}\left(0,e\right)Q\left(0,\nabla\right) & 0
\end{array}\right)\\
=\left(\begin{array}{cc}
0 & Q^{\prime}\left(0,e\right)^{2}Q\left(0,\nabla\right)\\
-Q^{\prime}\left(0,e\right)^{2}Q\left(0,\nabla\right) & 0
\end{array}\right)
\end{multline*}
we find that
\begin{align*}
 & \left(\left(\begin{array}{cc}
0 & Q^{\prime}\left(0,e\right)Q\left(0,\nabla\right)\\
Q^{\prime}\left(0,e\right)Q\left(0,\nabla\right) & 0
\end{array}\right)+m\left(\begin{array}{cc}
Q^{\prime}\left(0,e\right) & 0\\
0 & -Q^{\prime}\left(0,e\right)
\end{array}\right)\right)^{2}\\
 & =\left(\begin{array}{cc}
0 & Q^{\prime}\left(0,e\right)Q\left(0,\nabla\right)\\
Q^{\prime}\left(0,e\right)Q\left(0,\nabla\right) & 0
\end{array}\right)^{2}+m^{2}\left(\begin{array}{cc}
Q^{\prime}\left(0,e\right) & 0\\
0 & -Q^{\prime}\left(0,e\right)
\end{array}\right)^{2}=\left(\Delta-m^{2}\right)1_{\mathbb{R}^{8}},\\
\end{align*}
where we have used $Q(0,\nabla)^{2}=-\Delta1_{\R^{4}}$ and $Q'(0,e)^{2}=-1_{\R^{4}}$
(see \prettyref{prop:commute_Q_Q'} and \prettyref{lem:Hcompurules}).
With this we get the assertion.
\end{proof}

\begin{rem}
\begin{enumerate}[(a)]

\item It is worth noting that the two factors in the above factorisation
are unitarily congruent. For this it suffices to see that, for $f\in\mathbb{R}^{3}$
normalised and orthogonal to $e$,
\begin{align*}
\left(\begin{array}{cc}
Q^{\prime}\left(0,f\right) & 0\\
0 & Q^{\prime}\left(0,f\right)
\end{array}\right)\left(\begin{array}{cc}
mQ^{\prime}\left(0,e\right) & Q^{\prime}\left(0,e\right)Q\left(0,\nabla\right)\\
Q^{\prime}\left(0,e\right)Q\left(0,\nabla\right) & -mQ^{\prime}\left(0,e\right)
\end{array}\right)\left(\begin{array}{cc}
Q^{\prime}\left(0,f\right) & 0\\
0 & Q^{\prime}\left(0,f\right)
\end{array}\right)^{*}=\\
=\left(\begin{array}{cc}
mQ^{\prime}\left(0,f\right)Q^{\prime}\left(0,e\right) & Q^{\prime}\left(0,f\right)Q^{\prime}\left(0,e\right)Q\left(0,\nabla\right)\\
Q^{\prime}\left(0,f\right)Q^{\prime}\left(0,e\right)Q\left(0,\nabla\right) & -mQ^{\prime}\left(0,f\right)Q^{\prime}\left(0,e\right)
\end{array}\right)\left(\begin{array}{cc}
-Q^{\prime}\left(0,f\right) & 0\\
0 & -Q^{\prime}\left(0,f\right)
\end{array}\right),\\
-\left(\begin{array}{cc}
mQ^{\prime}\left(0,f\right)Q^{\prime}\left(0,e\right)Q^{\prime}\left(0,f\right) & Q^{\prime}\left(0,f\right)Q^{\prime}\left(0,e\right)Q^{\prime}\left(0,f\right)Q\left(0,\nabla\right)\\
Q^{\prime}\left(0,f\right)Q^{\prime}\left(0,e\right)Q^{\prime}\left(0,f\right)Q\left(0,\nabla\right) & -mQ^{\prime}\left(0,f\right)Q^{\prime}\left(0,e\right)Q^{\prime}\left(0,f\right)
\end{array}\right)\\
=-\left(\begin{array}{cc}
mQ^{\prime}\left(0,e\right) & Q^{\prime}\left(0,e\right)Q\left(0,\nabla\right)\\
Q^{\prime}\left(0,e\right)Q\left(0,\nabla\right) & -mQ^{\prime}\left(0,e\right)
\end{array}\right).
\end{align*}
 Here we have used $Q'(0,f)Q'(0,e)=-Q'(0,e)Q'(0,f)$ and $Q'(0,f)^{2}=-1$,
see \prettyref{lem:Hcompurules} and \prettyref{prop:commute_Q_Q'}.

\item By lifting\footnote{Strictly speaking this means that we consider the Kronecker product
\[
\frac{1}{\sqrt{2}}\left(\begin{array}{cc}
1 & 1\\
1 & -1
\end{array}\right)\otimes1_{4\times4},
\]
but for simplicity of notation, we leave this to the context and simply
write $\frac{1}{\sqrt{2}}\left(\begin{array}{cc}
1 & 1\\
1 & -1
\end{array}\right)$ instead.} the matrix

\[
\frac{1}{\sqrt{2}}\left(\begin{array}{cc}
1 & 1\\
1 & -1
\end{array}\right)
\]
in $\mathbb{R}^{2}$ to the block level of $\left(\begin{array}{cc}
mQ^{\prime}\left(0,e\right) & Q^{\prime}\left(0,e\right)Q\left(0,\nabla\right)\\
Q^{\prime}\left(0,e\right)Q\left(0,\nabla\right) & -mQ^{\prime}\left(0,e\right)
\end{array}\right)$, we see that we can exchange the role of the blocks $mQ^{\prime}\left(0,e\right)$
and $Q^{\prime}\left(0,e\right)Q\left(0,\nabla\right)$. We have indeed
that
\begin{align*}
\frac{1}{\sqrt{2}}\left(\begin{array}{cc}
1 & 1\\
1 & -1
\end{array}\right)\left(\begin{array}{cc}
mQ^{\prime}\left(0,e\right) & Q^{\prime}\left(0,e\right)Q\left(0,\nabla\right)\\
Q^{\prime}\left(0,e\right)Q\left(0,\nabla\right) & -mQ^{\prime}\left(0,e\right)
\end{array}\right)\frac{1}{\sqrt{2}}\left(\begin{array}{cc}
1 & 1\\
1 & -1
\end{array}\right)^{\top}=\\
=\frac{1}{2}\left(\begin{array}{cc}
mQ^{\prime}\left(0,e\right)+Q^{\prime}\left(0,e\right)Q\left(0,\nabla\right) & Q^{\prime}\left(0,e\right)Q\left(0,\nabla\right)-mQ^{\prime}\left(0,e\right)\\
-Q^{\prime}\left(0,e\right)Q\left(0,\nabla\right)+mQ^{\prime}\left(0,e\right) & Q^{\prime}\left(0,e\right)Q\left(0,\nabla\right)+mQ^{\prime}\left(0,e\right)
\end{array}\right)\left(\begin{array}{cc}
1 & 1\\
1 & -1
\end{array}\right),\\
=\left(\begin{array}{cc}
Q^{\prime}\left(0,e\right)Q\left(0,\nabla\right) & mQ^{\prime}\left(0,e\right)\\
mQ^{\prime}\left(0,e\right) & -Q^{\prime}\left(0,e\right)Q\left(0,\nabla\right)
\end{array}\right).
\end{align*}
The result for the Dirac operator is 
\begin{align*}
\frac{1}{\sqrt{2}}\left(\begin{array}{cc}
1 & 1\\
1 & -1
\end{array}\right)\left(\partial_{0}+\left(\begin{array}{cc}
mQ^{\prime}\left(0,e\right) & Q^{\prime}\left(0,e\right)Q\left(0,\nabla\right)\\
Q^{\prime}\left(0,e\right)Q\left(0,\nabla\right) & -mQ^{\prime}\left(0,e\right)
\end{array}\right)\right)\frac{1}{\sqrt{2}}\left(\begin{array}{cc}
1 & 1\\
1 & -1
\end{array}\right)^{\top}=\\
=\partial_{0}+\left(\begin{array}{cc}
Q^{\prime}\left(0,e\right)Q\left(0,\nabla\right) & mQ^{\prime}\left(0,e\right)\\
mQ^{\prime}\left(0,e\right) & -Q^{\prime}\left(0,e\right)Q\left(0,\nabla\right)
\end{array}\right),\\
=\partial_{0}+\left(\begin{array}{cc}
Q^{\prime}\left(0,e\right)Q\left(0,\nabla\right) & 0\\
0 & -Q^{\prime}\left(0,e\right)Q\left(0,\nabla\right)
\end{array}\right)+\\
+m\left(\begin{array}{cc}
0 & Q^{\prime}\left(0,e\right)\\
Q^{\prime}\left(0,e\right) & 0
\end{array}\right),
\end{align*}
which for $e=e_{1}$ is what is known as the \textbf{Weyl--Dirac
operator}. For $m=0$ the two blocks decouple.
\item\label{rem:tri-Dirac} It may be worth noting at this time that
we are solidly in the realm of vector analysis. Indeed, with
\[
D\left(\nabla\right)\coloneqq\begin{pmatrix}0 & 0 & 0 & 0\\
\nabla & 0 & 0 & 0\\
0 & \nabla\times & 0 & 0\\
0 & 0 & \nabla^{\top} & 0
\end{pmatrix}
\]
 we have
\[
\partial_{0}+\left(D\left(\nabla\right)-D\left(\nabla\right)^{*}\right)
\]
as massless Dirac or extended Maxwell. The corresponding Dirac mass
term is 
\[
M_{\mathrm{D},e}\coloneqq\begin{pmatrix}0 & 0 & -e^{\top} & 0\\
0 & -e\times & 0 & e\\
e & 0 & e\times & 0\\
0 & -e^{\top} & 0 & 0
\end{pmatrix}
\]
with $\left|e\right|$ as mass. For $M_{\mathrm{D},e}$ we have a
factorization as 
\[
M_{\mathrm{D},e}=-\left(D\left(e\right)-D\left(e\right)^{*}\right)\Sigma\,,
\]
where we have
\[
\Sigma\coloneqq\begin{pmatrix}0 & 0 & 0 & -1\\
0 & 0 & -1 & 0\\
0 & 1 & 0 & 0\\
1 & 0 & 0 & 0
\end{pmatrix}
\]

If we would want to make the transition to alternating differential
forms (alternating tensors), we would write $\nabla$ for the spatial
covariant derivative and $\nabla\wedge$ for the exterior derivative,
then the spatial Dirac operator takes on the rather elegant form
\[
\left(\nabla\wedge\right)-\left(\nabla\wedge\right)^{*}+\left(a\wedge\right)-\left(a\wedge\right)^{*}*\kappa
\]
acting on the direct sum of $L^{2}$-spaces of alternating differential
forms over all degrees. Here $*$ denotes the Hodge star operator
and $\kappa$ is a sign change operator with 
\[
\kappa\omega=\left(-1\right)^{q(q-1)/2}\omega=\left(-1\right)^{\left\lfloor q/2\right\rfloor }\omega,
\]
if $\omega$ is a q-form, $q\in\left\{ 0,1,2,3\right\} $, and $a$
is a (constant) 1-form, $\left|a\right|$ is the mass parameter. 

\end{enumerate}
\end{rem}

\section{The Majorana Equation}

Majorana developed a complex coefficients Dirac type equation which
actually commuted with complex conjugation, thus allowing to decouple
real and imaginary parts of any solution. We refer to \cite{Marsch2011}
and \cite{Marsch2012} for more recent in-depth considerations associated
with the Majorana equation. Our look at the Majorana equation is still
from a rather elementary perspective.

In the light of the Weyl--Dirac operator decoupling for $m=0$, one
may interpret Majorana's contribution as providing for a mass term
which is also block diagonal and so maintains the decoupling. Replacing
the Weyl--Dirac mass term $m\left(\begin{array}{cc}
0 & Q^{\prime}\left(0,e\right)\\
Q^{\prime}\left(0,e\right) & 0
\end{array}\right)$ by a mass term of the form $m\left(\begin{array}{cc}
Q^{\prime}\left(0,f\right) & 0\\
0 & -Q^{\prime}\left(0,f\right)
\end{array}\right)$, where $f$ is a normalised vector perpendicular to $e$, leads instead
of the Weyl--Dirac operator to the block diagonal system
\[
\left(\begin{array}{cc}
\partial_{0}+Q^{\prime}\left(0,e\right)Q\left(0,\nabla\right)+mQ^{\prime}\left(0,f\right) & 0\\
0 & \partial_{0}-\left(Q^{\prime}\left(0,e\right)Q\left(0,\nabla\right)+mQ^{\prime}\left(0,f\right)\right)
\end{array}\right).
\]
We refer to this as the \textbf{Majorana--Dirac operator}. Next,
we show that the sign change in the block diagonal can be done away
with.
\begin{prop}
Let $e,f\in\R^{3}$ normalised with $e\bot f$. Then 
\begin{multline*}
\left(\begin{array}{cc}
Q^{\prime}\left(0,e\right)Q\left(0,\nabla\right)+mQ^{\prime}\left(0,f\right) & 0\\
0 & -\left(Q^{\prime}\left(0,e\right)Q\left(0,\nabla\right)+mQ^{\prime}\left(0,f\right)\right)
\end{array}\right)\\
\rightleftharpoons\left(\begin{array}{cc}
Q^{\prime}\left(0,e\right)Q\left(0,\nabla\right)+mQ^{\prime}\left(0,f\right) & 0\\
0 & Q^{\prime}\left(0,e\right)Q\left(0,\nabla\right)+mQ^{\prime}\left(0,f\right)
\end{array}\right).
\end{multline*}
\end{prop}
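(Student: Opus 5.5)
The plan is to exhibit an explicit unitary that acts as the identity on the first diagonal block and conjugates the second block into its negative. Write
\[
B\coloneqq Q^{\prime}\left(0,e\right)Q\left(0,\nabla\right)+mQ^{\prime}\left(0,f\right).
\]
Put $g\coloneqq e\times f$. Since $e,f$ are orthonormal, $g$ is normalised and orthogonal to both $e$ and $f$. The candidate unitary is
\[
U\coloneqq\left(\begin{array}{cc}
1_{4\times4} & 0\\
0 & Q^{\prime}\left(0,g\right)
\end{array}\right),
\]
acting pointwise on $L^{2}(\R^{3},\R^{4})\oplus L^{2}(\R^{3},\R^{4})$.

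First I check that $U$ is unitary. By \prettyref{lem:Hcompurules}, transported to $\mathbb{H}^{\prime}$ via the congruence with $\sigma$, we have $Q^{\prime}(0,g)^{*}=Q^{\prime}(0,-g)=-Q^{\prime}(0,g)$. By \prettyref{prop:commute_Q_Q'} we have $Q^{\prime}(0,g)^{2}=-1_{4\times4}$. Together these give $Q^{\prime}(0,g)^{*}Q^{\prime}(0,g)=1_{4\times4}$, so $U$ is unitary.

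Next I record the algebraic relations that are needed.
\begin{enumerate}[(i)]
\item Since $g\perp e$ and $g\perp f$, \prettyref{lem:Hcompurules} (conjugated by $\sigma$) gives the anticommutation relations $Q^{\prime}(0,g)Q^{\prime}(0,e)=-Q^{\prime}(0,e)Q^{\prime}(0,g)$ and $Q^{\prime}(0,g)Q^{\prime}(0,f)=-Q^{\prime}(0,f)Q^{\prime}(0,g)$.
\item \prettyref{prop:commute_Q_Q'} gives $Q^{\prime}(0,g)Q(0,h)=Q(0,h)Q^{\prime}(0,g)$ for every $h\in\R^{3}$.
\end{enumerate}
Since $Q(0,\nabla)=\sum_{j=1}^{3}\partial_{j}\,Q(0,e_{j})$ with constant matrix coefficients, the constant matrix $Q^{\prime}(0,g)$ commutes with $Q(0,\nabla)$. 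Equivalently, one can argue on the Fourier side as in \prettyref{cor:orthogonal_derivative}. Multiplication by this constant invertible matrix also maps the maximal domain of $Q(0,\nabla)$ onto itself, so the identity holds as an operator identity and not just formally.

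Now the computation, using $Q^{\prime}(0,g)^{*}=-Q^{\prime}(0,g)$. For the derivative term:
\begin{align*}
Q^{\prime}(0,g)\,Q^{\prime}(0,e)Q(0,\nabla)\,Q^{\prime}(0,g)^{*} & =-Q^{\prime}(0,g)Q^{\prime}(0,e)Q(0,\nabla)Q^{\prime}(0,g)\\
 & =Q^{\prime}(0,e)Q^{\prime}(0,g)Q^{\prime}(0,g)Q(0,\nabla)\\
 & =-Q^{\prime}(0,e)Q(0,\nabla).
\end{align*}
In the second line, (i) moves $Q^{\prime}(0,g)$ past $Q^{\prime}(0,e)$ and absorbs the minus sign, and (ii) moves the other $Q^{\prime}(0,g)$ past $Q(0,\nabla)$. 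The third line uses $Q^{\prime}(0,g)^{2}=-1_{4\times4}$.

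The same steps, using only anticommutation with $Q^{\prime}(0,f)$, give
\[
Q^{\prime}(0,g)\,Q^{\prime}(0,f)\,Q^{\prime}(0,g)^{*}=-Q^{\prime}(0,f).
\]
Hence $Q^{\prime}(0,g)BQ^{\prime}(0,g)^{*}=-B$, and therefore
\[
U\left(\begin{array}{cc}
B & 0\\
0 & -B
\end{array}\right)U^{*}=\left(\begin{array}{cc}
B & 0\\
0 & B
\end{array}\right),
\]
which is the asserted unitary congruence.

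The main obstacle is not the algebra but choosing the right conjugator. A naive choice such as $Q^{\prime}(0,e)$ or $Q^{\prime}(0,f)$ would flip the sign of only one of the two terms of $B$. The point is that $g=e\times f$ is the unique direction (up to sign) for which $Q^{\prime}(0,g)$ anticommutes with both $Q^{\prime}(0,e)$ and $Q^{\prime}(0,f)$, while it still commutes with the $Q$-part by \prettyref{prop:commute_Q_Q'}. The only other thing to verify carefully is the domain statement in (ii), which is immediate because $Q^{\prime}(0,g)$ is a constant orthogonal matrix.
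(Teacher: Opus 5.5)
Your proposal is correct and follows the paper's route: the paper also conjugates by the block-diagonal unitary with blocks $1$ and $Q^{\prime}(0,e\times f)$. It relies on the same three facts, namely that $Q^{\prime}(0,e\times f)$ anticommutes with $Q^{\prime}(0,e)$ and $Q^{\prime}(0,f)$, commutes with $Q(0,\nabla)$, and squares to $-1$. Your explicit unitarity check and the remark about the domain of $Q(0,\nabla)$ are small additions that the paper leaves implicit.
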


\begin{proof}
Indeed, choosing $Q'(0,e\times f)$ as the unitary operator, we obtain
\begin{align*}
 & Q^{\prime}\left(0,e\times f\right)\left(Q^{\prime}\left(0,e\right)Q\left(0,\nabla\right)+mQ^{\prime}\left(0,f\right)\right)Q^{\prime}\left(0,e\times f\right)^{*}\\
 & =-\left(Q^{\prime}\left(0,e\times f\right)Q^{\prime}\left(0,e\right)Q^{\prime}\left(0,e\times f\right)Q\left(0,\nabla\right)+mQ^{\prime}\left(0,e\times f\right)Q^{\prime}\left(0,f\right)Q^{\prime}\left(0,e\times f\right)\right)\\
 & =-\left(Q^{\prime}\left(0,e\right)Q\left(0,\nabla\right)+mQ^{\prime}\left(0,f\right)\right),
\end{align*}
where we again have used $Q^{\prime}\left(0,e\times f\right)Q^{\prime}\left(0,e\right)=-Q^{\prime}\left(0,e\right)Q^{\prime}\left(0,e\times f\right)$
and $Q^{\prime}\left(0,e\times f\right)^{2}=-1_{\R^{4}}$ by \prettyref{prop:commute_Q_Q'}
and \prettyref{lem:Hcompurules}. Thus, the assertion follows by choosing
$U=\left(\begin{array}{cc}
1 & 0\\
0 & Q^{\prime}\left(0,e\times f\right)
\end{array}\right)$ as the unitary transformation. 
\end{proof}

Thus, the Majorana--Dirac operator is actually unitarily congruent
to 
\[
\left(\begin{array}{cc}
\partial_{0}+Q^{\prime}\left(0,e\right)Q\left(0,\nabla\right)+mQ^{\prime}\left(0,f\right) & 0\\
0 & \partial_{0}+Q^{\prime}\left(0,e\right)Q\left(0,\nabla\right)+mQ^{\prime}\left(0,f\right)
\end{array}\right)
\]
We refer to 
\[
\partial_{0}+Q^{\prime}\left(0,e\right)Q\left(0,\nabla\right)+mQ^{\prime}\left(0,f\right)
\]
as the actual \textbf{Majorana operator}, (for $e=e_{1}$ and $f=e_{2}$
in the classical case). Also the Majorana operator provides a convenient
(and actually smaller in terms of matrix size) solution of the factorisation
problem for the Klein--Gordon operator.
\begin{prop}
\label{prop:MajoSolvesKGE}Let $m\in\mathbb{R},e,f\in\mathbb{R}^{3}$
with $|e|=|f|=1$ and $e\bot f$. Then
\[
\left(\partial_{0}+Q^{\prime}\left(0,e\right)Q\left(0,\nabla\right)+mQ^{\prime}\left(0,f\right)\right)\left(\partial_{0}-\left(Q^{\prime}\left(0,e\right)Q\left(0,\nabla\right)+mQ^{\prime}\left(0,f\right)\right)\right)=\left(\partial^{2}_{0}-\Delta+m^{2}\right)1_{\mathbb{R}^{4}}.
\]
\end{prop}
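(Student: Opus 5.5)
Set $M\coloneqq Q^{\prime}\left(0,e\right)Q\left(0,\nabla\right)+mQ^{\prime}\left(0,f\right)$. Since $\partial_{0}$ commutes with the spatial, constant-coefficient operator $M$, we have
\[
\left(\partial_{0}+M\right)\left(\partial_{0}-M\right)=\partial_{0}^{2}-M^{2},
\]
so it suffices to show $M^{2}=\left(\Delta-m^{2}\right)1_{\mathbb{R}^{4}}$. As in the Dirac case, this identity can be checked on smooth functions, or after Fourier transform by replacing $\nabla$ with $\ii\xi$. The latter reduces it to an algebraic identity of constant matrices, where the real-quaternion rules apply directly.

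Expanding the square gives three contributions:
\[
M^{2}=\left(Q^{\prime}\left(0,e\right)Q\left(0,\nabla\right)\right)^{2}+m^{2}Q^{\prime}\left(0,f\right)^{2}+m\left(Q^{\prime}\left(0,e\right)Q\left(0,\nabla\right)Q^{\prime}\left(0,f\right)+Q^{\prime}\left(0,f\right)Q^{\prime}\left(0,e\right)Q\left(0,\nabla\right)\right).
\]
We treat them in turn.

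\textbf{First term.} By \prettyref{prop:commute_Q_Q'}, $Q^{\prime}$ commutes with $Q$; entrywise this is equally valid for $Q\left(0,\nabla\right)$, since $Q(0,\cdot)$ is linear and the $\partial_{j}$ commute. Hence
\[
\left(Q^{\prime}\left(0,e\right)Q\left(0,\nabla\right)\right)^{2}=Q^{\prime}\left(0,e\right)^{2}Q\left(0,\nabla\right)^{2}=(-1)(-\Delta)=\Delta,
\]
using $Q^{\prime}\left(0,e\right)^{2}=-1$ from \prettyref{prop:commute_Q_Q'} and $Q\left(0,\nabla\right)^{2}=-\Delta$ from the remark in the first section.

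\textbf{Second term.} Again by \prettyref{prop:commute_Q_Q'}, $m^{2}Q^{\prime}\left(0,f\right)^{2}=-m^{2}$.

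\textbf{Cross term.} Commuting $Q\left(0,\nabla\right)$ past $Q^{\prime}\left(0,f\right)$ turns the bracket into
\[
\left(Q^{\prime}\left(0,e\right)Q^{\prime}\left(0,f\right)+Q^{\prime}\left(0,f\right)Q^{\prime}\left(0,e\right)\right)Q\left(0,\nabla\right).
\]
This anticommutator vanishes because $e\bot f$. By \prettyref{lem:Hcompurules}, transported via $\sigma$ to $\mathbb{H}^{\prime}$, we have $Q^{\prime}\left(0,e\right)Q^{\prime}\left(0,f\right)=Q^{\prime}\left(0,e\times f\right)=-Q^{\prime}\left(0,f\right)Q^{\prime}\left(0,e\right)$.

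The only step needing care is this cross term. One must verify that the commutation of $Q^{\prime}$ with $Q$ carries over to the differential operator $Q\left(0,\nabla\right)$, and that the anticommutation relation holds in $\mathbb{H}^{\prime}$ as well as in $\mathbb{H}$. Both follow since conjugation by $\sigma$ is an algebra isomorphism and everything is linear in $\nabla$. Adding the three terms gives $M^{2}=\Delta-m^{2}$, and hence the claimed factorisation $\left(\partial_{0}^{2}-\Delta+m^{2}\right)1_{\mathbb{R}^{4}}$.
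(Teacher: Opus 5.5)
Your proof is correct and follows essentially the same route as the paper. You reduce to $M^{2}=(\Delta-m^{2})1_{\mathbb{R}^{4}}$ and expand the square, then use $Q'(0,e)^{2}=-1$, $Q(0,\nabla)^{2}=-\Delta$, the commutation of $\mathbb{H}'$ with $\mathbb{H}$, and the anticommutation $Q'(0,e)Q'(0,f)=-Q'(0,f)Q'(0,e)$ for $e\bot f$ to kill the cross term. Your extra justifications (transporting Lemma \ref{lem:Hcompurules} to $\mathbb{H}'$ via conjugation by $\sigma$, and extending the commutation to $Q(0,\nabla)$ by linearity) only make explicit what the paper leaves implicit.
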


\begin{proof}
We compute
\begin{multline*}
\left(\partial_{0}+Q^{\prime}\left(0,e\right)Q\left(0,\nabla\right)+mQ^{\prime}\left(0,f\right)\right)\left(\partial_{0}-\left(Q^{\prime}\left(0,e\right)Q\left(0,\nabla\right)+mQ^{\prime}\left(0,f\right)\right)\right)\\
=\partial^{2}_{0}-\left(Q^{\prime}\left(0,e\right)Q\left(0,\nabla\right)+mQ^{\prime}\left(0,f\right)\right)^{2}
\end{multline*}
and from \prettyref{lem:Hcompurules} it follows
\begin{align*}
 & \left(Q^{\prime}\left(0,e\right)Q\left(0,\nabla\right)+mQ^{\prime}\left(0,f\right)\right)^{2}\\
 & =\left(Q^{\prime}\left(0,e\right)Q\left(0,\nabla\right)\right)^{2}+mQ^{\prime}\left(0,e\right)Q^{\prime}\left(0,f\right)Q\left(0,\nabla\right)+mQ^{\prime}\left(0,f\right)Q^{\prime}\left(0,e\right)Q\left(0,\nabla\right)+\left(mQ^{\prime}\left(0,f\right)\right)^{2}\\
 & =\Delta1_{\mathbb{R}^{4}}-m^{2}1_{\mathbb{R}^{4}},
\end{align*}
which together yield the assertion.
\end{proof}

\begin{rem}
\label{rem:Cinv}\begin{enumerate}[(a)]

\item Applying the same unitary transformation for the Weyl--Dirac
system, we obtain the unitarily congruent version
\[
\partial_{0}+\left(\begin{array}{cc}
Q^{\prime}\left(0,e\right)Q\left(0,\nabla\right) & mQ^{\prime}\left(0,f\right)\\
mQ^{\prime}\left(0,f\right) & Q^{\prime}\left(0,e\right)Q\left(0,\nabla\right)
\end{array}\right).
\]
The same transformation applied to the Dirac operator yields the less
familiar looking unitarily congruent version\footnote{Strangely, despite being unitarily congruent, this version is not
U$\left(1\right)$ invariant, since two different, non-commuting,
imaginary units have appeared; see also \prettyref{rem:Ue1-inv}.}
\[
\partial_{0}+\left(\begin{array}{cc}
mQ^{\prime}\left(0,e\right) & Q^{\prime}\left(0,f\right)Q\left(0,\nabla\right)\\
Q^{\prime}\left(0,f\right)Q\left(0,\nabla\right) & mQ^{\prime}\left(0,e\right)
\end{array}\right).
\]

\item Let us now consider the selfadjoint, unitary operator
\[
\mathfrak{C}\coloneqq\left(\begin{array}{cc}
0 & -Q^{\prime}\left(0,e\times f\right)\\
Q^{\prime}\left(0,e\times f\right) & 0
\end{array}\right).
\]
Then, by the same reason as above, we have
\begin{multline*}
Q^{\prime}\left(0,e\times f\right)Q^{\prime}\left(0,e\right)Q^{\prime}\left(0,e\times f\right)Q\left(0,\nabla\right)+mQ^{\prime}\left(0,e\times f\right)Q^{\prime}\left(0,f\right)Q^{\prime}\left(0,e\times f\right)\\
=Q^{\prime}\left(0,e\right)Q\left(0,\nabla\right)+mQ^{\prime}\left(0,f\right)
\end{multline*}
and hence, 
\begin{align*}
\mathfrak{C}\left(\begin{array}{cc}
Q^{\prime}\left(0,e\right)Q\left(0,\nabla\right)+mQ^{\prime}\left(0,f\right) & 0\\
0 & -Q^{\prime}\left(0,e\right)Q\left(0,\nabla\right)-mQ^{\prime}\left(0,f\right)
\end{array}\right)\mathfrak{C}^{*}=\\
=\left(\begin{array}{cc}
Q^{\prime}\left(0,e\right)Q\left(0,\nabla\right)+mQ^{\prime}\left(0,f\right) & 0\\
0 & -Q^{\prime}\left(0,e\right)Q\left(0,\nabla\right)-mQ^{\prime}\left(0,f\right)
\end{array}\right),
\end{align*}
that is, $\left(\begin{array}{cc}
Q^{\prime}\left(0,e\right)Q\left(0,\nabla\right)+mQ^{\prime}\left(0,f\right) & 0\\
0 & -Q^{\prime}\left(0,e\right)Q\left(0,\nabla\right)-mQ^{\prime}\left(0,f\right)
\end{array}\right)$ is invariant under the transformation with $\mathfrak{C}$. 

\item Let us consider the Weyl form of the Dirac mass term. Then
the transformation $\mathfrak{C}$ just applied leads to
\begin{align*}
\mathfrak{C}\left(\begin{array}{cc}
0 & Q^{\prime}\left(0,e\right)\\
Q^{\prime}\left(0,e\right) & 0
\end{array}\right)\mathfrak{C}^{*} & =\left(\begin{array}{cc}
0 & Q^{\prime}\left(0,e\times f\right)Q^{\prime}\left(0,e\right)Q^{\prime}\left(0,e\times f\right)\\
Q^{\prime}\left(0,e\times f\right)Q^{\prime}\left(0,e\right)Q^{\prime}\left(0,e\times f\right) & 0
\end{array}\right)\\
 & =\left(\begin{array}{cc}
0 & Q^{\prime}\left(0,e\right)\\
Q^{\prime}\left(0,e\right) & 0
\end{array}\right).
\end{align*}
In other words, also the mass term in the Weyl form of the Dirac equation
is invariant under transformation with $\mathfrak{C}$.

\item Also the Majorana case is thus reconsidered in a completely
vector analytical language. Indeed, analogously and in the same notation
to \eqref{rem:Ue1-inv}, \eqref{rem:tri-Dirac}, we have now 
\[
\partial_{0}+\left(D\left(\nabla+f\right)-D\left(\nabla+f\right)^{*}\right)
\]
as a description of the Majorana operator. Here the vector $f$ is
arbitrary as long as $\left|f\right|$ is the wanted mass.

In differential form notation this is even more elegant than the Dirac
mass term. Indeed, we have 
\[
\left(\left(\nabla+f\right)\wedge\right)-\left(\left(\nabla+f\right)\wedge\right)^{*},
\]
where $f$ is a (constant) 1-form and $\left|f\right|$ is the mass
parameter.

\end{enumerate}
\end{rem}

\section{Charge Conjugation}

In this section, we address the concept of ``charge conjugation''
for the Dirac equation/operator. For this, it is necessary to introduce
yet another term, the charge term. Let $e,f\in\mathbb{R}^{3}$ be
normalised and orthogonal to one another and recall 
\begin{equation}
\mathfrak{C}\coloneqq\left(\begin{array}{cc}
0 & -Q^{\prime}\left(0,e\times f\right)\\
Q^{\prime}\left(0,e\times f\right) & 0
\end{array}\right)\label{eq:defC}
\end{equation}
from the previous remark. The additional charge term, is another skew-selfadjoint
matrix
\[
q\left(\begin{array}{cc}
Q\left(0,A\right)+A_{0}Q^{\prime}\left(0,e\right) & 0\\
0 & -Q\left(0,A\right)+A_{0}Q^{\prime}\left(0,e\right)
\end{array}\right),
\]
where $\left(A_{0},A\right)\in\R^{4}$ is now constructed as a ``potential''
from an underlying electromagnetic field, $q\in\mathbb{R}$ denotes
the charge. This term is then added to both the Weyl--Dirac operator
and the Majorana--Dirac operator. We have already seen that both
these operators are invariant under similarity transformation with
$\mathfrak{C}$, see \prettyref{rem:Cinv}. The charge term however
behaves differently.
\begin{prop}
\label{prop:chargconj}Let $e,f\in\mathbb{R}^{3},$$|e|=|f|=1,e\bot f,(A_{0},A)\in\mathbb{R}^{4}$.
Then with $\mathfrak{C}$ from \prettyref{eq:defC} 
\begin{multline*}
\mathfrak{C}\left(\begin{array}{cc}
Q\left(0,A\right)+A_{0}Q^{\prime}\left(0,e\right) & 0\\
0 & -Q\left(0,A\right)+A_{0}Q^{\prime}\left(0,e\right)
\end{array}\right)\mathfrak{C}^{*}\\
=-\left(\begin{array}{cc}
Q\left(0,A\right)+A_{0}Q^{\prime}\left(0,e\right) & 0\\
0 & -Q\left(0,A\right)+A_{0}Q^{\prime}\left(0,e\right)
\end{array}\right).
\end{multline*}
\end{prop}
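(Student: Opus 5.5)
Write $g\coloneqq e\times f$, $P\coloneqq Q'(0,g)$, and let $X\coloneqq Q(0,A)+A_{0}Q'(0,e)$ and $Y\coloneqq-Q(0,A)+A_{0}Q'(0,e)$ be the diagonal blocks. The plan is to compute the conjugation blockwise and then use only the commutation relations from \prettyref{prop:commute_Q_Q'} and \prettyref{lem:Hcompurules}.

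First I check that $\mathfrak{C}$ is unitary and compute $\mathfrak{C}^{*}$. Since $P^{*}=-P$ and $P^{2}=-1_{\R^{4}}$ (because $|g|=1$), we get
\[
\mathfrak{C}^{*}=\left(\begin{array}{cc}
0 & P^{*}\\
-P^{*} & 0
\end{array}\right)=\left(\begin{array}{cc}
0 & -P\\
P & 0
\end{array}\right)=\mathfrak{C}.
\]
Multiplying out the block product then gives
\[
\mathfrak{C}\left(\begin{array}{cc}
X & 0\\
0 & Y
\end{array}\right)\mathfrak{C}^{*}=\left(\begin{array}{cc}
-PYP & 0\\
0 & -PXP
\end{array}\right).
\]
So the diagonal blocks are swapped and conjugated by $P$, up to a sign. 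This is the step where the sign bookkeeping must be done carefully. It reduces the claim to the two identities $PYP=X$ and $PXP=Y$.

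Next I evaluate $PZP$ for each type of term. By \prettyref{prop:commute_Q_Q'}, $Q'(0,g)$ commutes with $Q(0,A)$, so $PQ(0,A)P=P^{2}Q(0,A)=-Q(0,A)$. For the other term, \prettyref{lem:Hcompurules} (transported to $\mathbb{H}'$ via $\sigma$) gives $Q'(0,g)Q'(0,e)=-Q'(0,e)Q'(0,g)$, because $g\perp e$. Hence $PQ'(0,e)P=-Q'(0,e)P^{2}=Q'(0,e)$.

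Combining these, $PYP=Q(0,A)+A_{0}Q'(0,e)=X$ and $PXP=-Q(0,A)+A_{0}Q'(0,e)=Y$. The conjugated operator is therefore $\diag(-X,-Y)$, which is the assertion.

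The whole argument is finite-dimensional linear algebra. The only obstacle is the sign in the block product. The essential mechanism is that $P$ commutes with the $\mathbb{H}$-part but anticommutes with $Q'(0,e)$. This flips the relative sign between $Q(0,A)$ and $A_{0}Q'(0,e)$, and the block swap in $\mathfrak{C}$ then produces the overall minus.
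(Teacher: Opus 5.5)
Your proof is correct and is essentially the paper's own argument. The paper likewise first notes that conjugation by $\mathfrak{C}$ sends $\diag(C,D)$ to $\diag(-PDP,-PCP)$ with $P=Q'(0,e\times f)$, and then uses that $P$ commutes with $Q(0,A)$, anticommutes with $Q'(0,e)$, and satisfies $P^{2}=-1$.
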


\begin{proof}
First of all we note that for matrices $C,D\in\mathbb{R}^{4\times4}$
we have
\[
\mathfrak{C}\left(\begin{array}{cc}
C & 0\\
0 & D
\end{array}\right)\mathfrak{C}^{*}=\left(\begin{array}{cc}
-Q^{\prime}\left(0,e\times f\right)DQ^{\prime}\left(0,e\times f\right) & 0\\
0 & -Q^{\prime}\left(0,e\times f\right)CQ^{\prime}\left(0,e\times f\right)
\end{array}\right).
\]
Next, by \prettyref{prop:commute_Q_Q'} and \prettyref{lem:Hcompurules},
we get
\begin{align*}
-Q^{\prime}\left(0,e\times f\right)Q\left(0,A\right)Q^{\prime}\left(0,e\times f\right) & =-Q^{\prime}\left(0,e\times f\right)Q^{\prime}\left(0,e\times f\right)Q\left(0,A\right)=Q\left(0,A\right),\\
-Q^{\prime}\left(0,e\times f\right)Q^{\prime}\left(0,e\right)Q^{\prime}\left(0,e\times f\right) & =Q^{\prime}\left(0,e\times f\right)Q^{\prime}\left(0,e\times f\right)Q^{\prime}\left(0,e\right)=-Q^{\prime}\left(0,e\right).
\end{align*}
Hence,
\begin{align*}
 & \mathfrak{C}\left(\begin{array}{cc}
Q\left(0,A\right)+A_{0}Q^{\prime}\left(0,e\right) & 0\\
0 & -Q\left(0,A\right)+A_{0}Q^{\prime}\left(0,e\right)
\end{array}\right)\mathfrak{C}^{*}\\
 & =\left(\begin{array}{cc}
-Q^{\prime}\left(0,e\times f\right)\left(-Q\left(0,A\right)+A_{0}Q^{\prime}\left(0,e\right)\right)Q^{\prime}\left(0,e\times f\right) & 0\\
0 & -Q^{\prime}\left(0,e\times f\right)\left(Q\left(0,A\right)+A_{0}Q^{\prime}\left(0,e\right)\right)Q^{\prime}\left(0,e\times f\right)
\end{array}\right)\\
 & =\left(\begin{array}{cc}
-Q\left(0,A\right)-A_{0}Q^{\prime}\left(0,e\right) & 0\\
0 & Q\left(0,A\right)-A_{0}Q^{\prime}\left(0,e\right)
\end{array}\right)\\
 & =-\left(\begin{array}{cc}
Q\left(0,A\right)+A_{0}Q^{\prime}\left(0,e\right) & 0\\
0 & -Q\left(0,A\right)+A_{0}Q^{\prime}\left(0,e\right)
\end{array}\right).\qedhere
\end{align*}
\end{proof}

\begin{rem}
If a charge term is present, then, by \prettyref{prop:chargconj},
the transformation with $\mathfrak{C}$ results in the transition
from 
\[
q\left(\begin{array}{cc}
Q\left(0,A\right)+A_{0}Q^{\prime}\left(0,e\right) & 0\\
0 & -Q\left(0,A\right)+A_{0}Q^{\prime}\left(0,e\right)
\end{array}\right)
\]
 to 
\[
-q\left(\begin{array}{cc}
Q\left(0,A\right)+A_{0}Q^{\prime}\left(0,e\right) & 0\\
0 & -Q\left(0,A\right)+A_{0}Q^{\prime}\left(0,e\right)
\end{array}\right),
\]
that is, we have an opposite charge in both the Weyl--Dirac and the
Majorana--Dirac operator case. Since in the Weyl--Dirac operator
case we only have $Q^{\prime}\left(0,e\right)$ as a coefficient in
$\mathbb{H}^{\prime}$ occurring, there is, strangely enough, another
``charge conjugation'' available namely
\[
\mathfrak{C}^{\prime}\coloneqq\left(\begin{array}{cc}
0 & -Q^{\prime}\left(0,f\right)\\
Q^{\prime}\left(0,f\right) & 0
\end{array}\right),
\]
the calculation being the same as for $\mathfrak{C}$. One could argue
that the Majorana--Dirac operator should be favoured here as there
is no ambiguity in the choice of $\mathfrak{C}$, as $e$ and $f$
are already used in its formulation. For the Weyl--Dirac equation
there the transformations $\mathfrak{C}$ and $\mathfrak{C}^{\prime}$
are possible ``charge conjugation'' operators. 
\end{rem}

\section{The \textquotedblleft Majorana Particle\textquotedblright}

There is recently increasing interest in a so-called ``Majorana particle'',
which is described as being its own ``anti-particle''. This stems
from the observation that the Majorana mass term is linked through
$\mathfrak{C}$ to the Weyl--Dirac mass term via the equalities
\begin{equation}
\left(\begin{array}{cc}
Q^{\prime}\left(0,f\right) & 0\\
0 & -Q^{\prime}\left(0,f\right)
\end{array}\right)=\left(\begin{array}{cc}
0 & Q^{\prime}\left(0,e\right)\\
Q^{\prime}\left(0,e\right) & 0
\end{array}\right)\mathfrak{C}=\mathfrak{C}\left(\begin{array}{cc}
0 & Q^{\prime}\left(0,e\right)\\
Q^{\prime}\left(0,e\right) & 0
\end{array}\right).\label{eq:Cecom}
\end{equation}
Thus the Majorana--Dirac equation can be written as
\[
\left(\partial_{0}+\left(\begin{array}{cc}
Q^{\prime}\left(0,e\right)Q\left(0,\nabla\right) & 0\\
0 & -Q^{\prime}\left(0,e\right)Q\left(0,\nabla\right)
\end{array}\right)\right)\Psi+m\left(\begin{array}{cc}
0 & Q^{\prime}\left(0,e\right)\\
Q^{\prime}\left(0,e\right) & 0
\end{array}\right)\Psi_{\mathfrak{C}}=F,
\]
where $\Psi_{\mathfrak{C}}\coloneqq\mathfrak{C}\Psi.$ This coincides
with the Weyl--Dirac equation if 
\[
\Psi=\Psi_{\mathfrak{C}},
\]
that is,
\[
\Psi\in\ker\left(1-\mathfrak{C}\right),
\]
which to happen implies a constraint on the data as the next statement
confirms.
\begin{thm}
\label{thm:majopart}Let $e,f\in\mathbb{R}^{3}$ normalised and orthogonal,
$\mathfrak{C}$ as in \prettyref{eq:defC}. Then, for smooth $\Psi$
and $F$ such that
\[
\left(\partial_{0}+\left(\begin{array}{cc}
Q^{\prime}\left(0,e\right)Q\left(0,\nabla\right) & 0\\
0 & -Q^{\prime}\left(0,e\right)Q\left(0,\nabla\right)
\end{array}\right)\right)\Psi+m\left(\begin{array}{cc}
0 & Q^{\prime}\left(0,e\right)\\
Q^{\prime}\left(0,e\right) & 0
\end{array}\right)\Psi=F,
\]
we have
\[
\Psi=\mathfrak{C}\Psi\iff F=\mathfrak{C}F.
\]
\end{thm}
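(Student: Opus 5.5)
The plan is to write the equation as $L\Psi=F$ with
\[
L\coloneqq\partial_{0}+W+mM,\qquad W\coloneqq\left(\begin{array}{cc}
Q^{\prime}\left(0,e\right)Q\left(0,\nabla\right) & 0\\
0 & -Q^{\prime}\left(0,e\right)Q\left(0,\nabla\right)
\end{array}\right),\qquad M\coloneqq\left(\begin{array}{cc}
0 & Q^{\prime}\left(0,e\right)\\
Q^{\prime}\left(0,e\right) & 0
\end{array}\right),
\]
and to show that $L$ commutes with $\mathfrak{C}$. Since $\mathfrak{C}$ is a constant matrix, it commutes with $\partial_{0}$. By \prettyref{rem:Cinv}(c) we have $\mathfrak{C}M\mathfrak{C}^{*}=M$. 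For $W$, I would use the block formula from the proof of \prettyref{prop:chargconj}:
\[
\mathfrak{C}\left(\begin{array}{cc}
C & 0\\
0 & D
\end{array}\right)\mathfrak{C}^{*}=\left(\begin{array}{cc}
-Q^{\prime}(0,g)DQ^{\prime}(0,g) & 0\\
0 & -Q^{\prime}(0,g)CQ^{\prime}(0,g)
\end{array}\right),\qquad g\coloneqq e\times f .
\]
Take $C=-D=Q^{\prime}(0,e)Q(0,\nabla)$. By \prettyref{prop:commute_Q_Q'}, $Q^{\prime}(0,g)$ commutes with $Q(0,\nabla)$; this is entrywise in the coefficients, or via the Fourier transform as in \prettyref{cor:orthogonal_derivative}. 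By \prettyref{lem:Hcompurules} transported to $\mathbb{H}^{\prime}$, $Q^{\prime}(0,g)$ anticommutes with $Q^{\prime}(0,e)$ because $g\perp e$, and $Q^{\prime}(0,g)^{2}=-1$. Together these give
\[
-Q^{\prime}(0,g)\,Q^{\prime}(0,e)Q(0,\nabla)\,Q^{\prime}(0,g)=-Q^{\prime}(0,e)Q(0,\nabla).
\]
Hence the upper-left block of $\mathfrak{C}W\mathfrak{C}^{*}$ is $-Q^{\prime}(0,g)DQ^{\prime}(0,g)=Q^{\prime}(0,e)Q(0,\nabla)$ and the lower-right block is $-Q^{\prime}(0,e)Q(0,\nabla)$, so $\mathfrak{C}W\mathfrak{C}^{*}=W$. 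Since $\mathfrak{C}$ is unitary, this yields $\mathfrak{C}L=L\mathfrak{C}$ on smooth functions.

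The direction ``$\Rightarrow$'' is then immediate: if $\Psi=\mathfrak{C}\Psi$, then $\mathfrak{C}F=\mathfrak{C}L\Psi=L\mathfrak{C}\Psi=L\Psi=F$.

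For ``$\Leftarrow$'', suppose $F=\mathfrak{C}F$ and set $\Phi\coloneqq\Psi-\mathfrak{C}\Psi$. Then $L\Phi=F-\mathfrak{C}F=0$, and the task is to conclude $\Phi=0$. This uniqueness step is the main obstacle, because the statement only says ``smooth'' and gives no initial or causality conditions. I would interpret it in the paper's evolutionary setting, where $\partial_{0}$ is the time derivative on a weighted $L^{2}$-space over $\mathbb{R}$ and $\partial_{0}$ is continuously invertible. I would also check that $W+mM$ is skew-selfadjoint: $W$ is skew because $Q^{\prime}(0,e)Q(0,\nabla)$ is a product of commuting skew operators, hence symmetric, and then the real signs and $Q(0,\nabla)$ being skew together give skewness; $M$ is skew because $Q^{\prime}(0,e)$ is skew. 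With that in hand, $L=\partial_{0}+W+mM$ is injective (indeed boundedly invertible), so $\Phi=0$, i.e.\ $\Psi=\mathfrak{C}\Psi$. Alternatively, for solutions vanishing at $-\infty$ (or with zero initial data), one can use the energy identity $\tfrac{\mathrm{d}}{\mathrm{d}t}\tfrac12|\Phi(t)|^{2}=-\langle\Phi,(W+mM)\Phi\rangle=0$. I would state this solution-theory assumption explicitly, since without it the converse fails for arbitrary homogeneous solutions.
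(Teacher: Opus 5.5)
Your proposal is correct and follows the paper's proof. For ``$\Rightarrow$'' you show that $\mathfrak{C}$ commutes with the Weyl--Dirac operator, via the block conjugation formula and \prettyref{rem:Cinv}. For ``$\Leftarrow$'' you apply uniqueness for the skew-selfadjoint evolutionary problem to $\Psi-\mathfrak{C}\Psi$, and you make explicit the causality or zero-initial-data assumption that the paper leaves implicit.

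One slip in your skewness check: $Q(0,\nabla)$ is (formally) selfadjoint, not skew, because its Fourier symbol is $\mathrm{i}\,Q(0,\xi)$ with $Q(0,\xi)$ real skew-symmetric. So $Q^{\prime}(0,e)Q(0,\nabla)$ is skew-selfadjoint because it is a constant skew matrix times a commuting selfadjoint operator. A product of two commuting skew operators would instead be symmetric, which is the opposite of what you need.
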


\begin{proof}
We shall assume $\Psi=\mathfrak{C}\Psi$ first. Then we compute using
\prettyref{eq:Cecom} and \prettyref{rem:Cinv} (and $\mathfrak{C}=\mathfrak{C}^{*}=\mathfrak{C}^{-1}$)
to obtain
\begin{align*}
F & =\left(\partial_{0}+\left(\begin{array}{cc}
Q^{\prime}\left(0,e\right)Q\left(0,\nabla\right) & 0\\
0 & -Q^{\prime}\left(0,e\right)Q\left(0,\nabla\right)
\end{array}\right)\right)\Psi+m\left(\begin{array}{cc}
0 & Q^{\prime}\left(0,e\right)\\
Q^{\prime}\left(0,e\right) & 0
\end{array}\right)\Psi\\
 & =\left(\partial_{0}+\left(\begin{array}{cc}
Q^{\prime}\left(0,e\right)Q\left(0,\nabla\right) & 0\\
0 & -Q^{\prime}\left(0,e\right)Q\left(0,\nabla\right)
\end{array}\right)\right)\mathfrak{C}\Psi+m\left(\begin{array}{cc}
0 & Q^{\prime}\left(0,e\right)\\
Q^{\prime}\left(0,e\right) & 0
\end{array}\right)\mathfrak{C}\Psi\\
 & =\mathfrak{C}\left(\partial_{0}+\left(\begin{array}{cc}
Q^{\prime}\left(0,e\right)Q\left(0,\nabla\right) & 0\\
0 & -Q^{\prime}\left(0,e\right)Q\left(0,\nabla\right)
\end{array}\right)\right)\Psi+m\mathfrak{C}\left(\begin{array}{cc}
0 & Q^{\prime}\left(0,e\right)\\
Q^{\prime}\left(0,e\right) & 0
\end{array}\right)\Psi\\
 & =\mathfrak{C}F.
\end{align*}
On the other hand, if $\mathfrak{C}F=F,$ then by a similar computation
just carried out, we infer
\begin{multline*}
\left(\partial_{0}+\left(\begin{array}{cc}
Q^{\prime}\left(0,e\right)Q\left(0,\nabla\right) & 0\\
0 & -Q^{\prime}\left(0,e\right)Q\left(0,\nabla\right)
\end{array}\right)\right)\Psi+m\left(\begin{array}{cc}
0 & Q^{\prime}\left(0,e\right)\\
Q^{\prime}\left(0,e\right) & 0
\end{array}\right)\Psi\\
=\left(\partial_{0}+\left(\begin{array}{cc}
Q^{\prime}\left(0,e\right)Q\left(0,\nabla\right) & 0\\
0 & -Q^{\prime}\left(0,e\right)Q\left(0,\nabla\right)
\end{array}\right)\right)\mathfrak{C}\Psi+m\left(\begin{array}{cc}
0 & Q^{\prime}\left(0,e\right)\\
Q^{\prime}\left(0,e\right) & 0
\end{array}\right)\mathfrak{C}\Psi
\end{multline*}
Since the operator $\left(\begin{array}{cc}
Q^{\prime}\left(0,e\right)Q\left(0,\nabla\right) & mQ^{\prime}\left(0,e\right)\\
mQ^{\prime}\left(0,e\right) & -Q^{\prime}\left(0,e\right)Q\left(0,\nabla\right)
\end{array}\right)$ is (easily) confirmed to be skew-selfadjoint with its maximal domain
in $L^{2}(\mathbb{R}^{3};\mathbb{R}^{8})$ the theory of evolutionary
equations (see, e.g., \cite{Picard2009,PicardMcGhee2011,Primer})
or $C_{0}$-semigroups yield uniqueness of smooth solutions. Thus,
$\mathfrak{C}\Psi=\Psi$, as desired.
\end{proof}

\begin{rem}
~
\begin{enumerate}
\item The upshot of \prettyref{thm:majopart} is that looking for the ``Majorana
particle'' amounts to look for constraints on the values of $F$ rather
than additional properties of $\Psi$. Even if, as one does, only
considers initial data, this just means
\[
F=\delta\otimes\Psi_{0+}\:,
\]
in other words, initial data are just data concentrated at $t=0$,
implying that the initial data $\Psi_{0+}$ are constrained in the
same way
\[
\Psi_{0+}\in\ker\left(1-\mathfrak{C}\right).
\]
Compared to other concepts of particles, it seems rather odd to create
new particles by restricting data to an invariant subspace. \\
Following the standard procedure of calling a particle Majorana particle
(without the quotation marks), we would just be asking for particles
being described by the Majorana operator. However, so far the physicists
community has dismissed the Majorana operator for theoretical reason,
since it is incompatible with further developments of quantum theory.
This is of course not surprising, since these developments were based
on the Dirac operator.
\item It may be interesting to consider, what the development might have
been if the Majorana--Dirac mass term had been discovered first.
After all Dirac found his equation by trying to factorise the Klein--Gordon
equation and the Majorana--Dirac operator does the trick just as
well, see \prettyref{prop:MajoSolvesKGE}. In fact, starting with
smooth $\Psi_{\mu}$ and $F$ satisfying the equation given by the
Majorana--Dirac operator, that is,
\[
\left(\partial_{0}+\left(\begin{array}{cc}
Q^{\prime}\left(0,e\right)Q\left(0,\nabla\right) & 0\\
0 & -Q^{\prime}\left(0,e\right)Q\left(0,\nabla\right)
\end{array}\right)\right)\Psi_{\mu}+m\left(\begin{array}{cc}
Q^{\prime}\left(0,f\right) & 0\\
0 & -Q^{\prime}\left(0,f\right)
\end{array}\right)\Psi_{\mu}=F,
\]
and using 
\[
\left(\begin{array}{cc}
Q^{\prime}\left(0,f\right) & 0\\
0 & -Q^{\prime}\left(0,f\right)
\end{array}\right)=\left(\begin{array}{cc}
0 & Q^{\prime}\left(0,e\right)\\
Q^{\prime}\left(0,e\right) & 0
\end{array}\right)\mathfrak{C},
\]
we get
\begin{align*}
\left(\partial_{0}+\left(\begin{array}{cc}
Q^{\prime}\left(0,e\right)Q\left(0,\nabla\right) & 0\\
0 & -Q^{\prime}\left(0,e\right)Q\left(0,\nabla\right)
\end{array}\right)\right)\Psi_{\mu}+m\left(\begin{array}{cc}
0 & Q^{\prime}\left(0,e\right)\\
Q^{\prime}\left(0,e\right) & 0
\end{array}\right)\Psi_{\mu,\mathfrak{C}} & =F
\end{align*}
with $\Psi_{\mu,\mathfrak{C}}=\mathfrak{C}\Psi_{\mu}.$ Would we then
look for a ``Dirac particle'', which by definition would have to
be its own Majorana ``anti-particle'', that is, $\Psi_{\mu}=\Psi_{\mu,\mathfrak{C}}\:$?
\item Whereas many of the above considered operators have been identified
to be unitarily equivalent and, thus, implying the `same physics',
the choice for the mass term in the Majorana--Dirac or the Weyl--Dirac
perspective appears to be arbitrary from the insights following the
rationales outlined here. In fact, since $\mathfrak{C}$ leaves both
operators invariant (hence no charge conjugation), there is actually
no anti-particle in this case, anyway. In the light of the above observations
it appears more reasonable to ask, ``Which operator is the right
one?'', the Dirac operator or the Majorana alternative? How could
this be decided by measuring? 
\end{enumerate}
Whether or not it is possible to distinguish these alternatives by
measuring is unclear, yet the next section alludes to a coherent perspectives
concerning the spin, if the Majorana operator is favoured.
\end{rem}

\section{Spin 0, Spin $\frac{1}{2}$, Spin 1,... from the Perspective of the
Majorana Operator}

To begin with, we like to point out one more curiosity linked to the
Majorana operator. The Klein--Gordon operator $\partial^{2}_{0}-\Delta+m^{2}$
is frequently referred to as governing the spin $0$ particles in
the light of the Dirac operator describing spin $\frac{1}{2}$ particles.
In view of the sequence of the intricate standard constructions for
operators associated with higher spin numbers, which are all first
order systems, it appears rather odd, that there should be a second
order system describing spin $0$ particles. The problem is, there
is no Dirac type equation in $\mathbb{C}^{2}$ or $\mathbb{R}^{4}$. 

Well, in the Majorana operator perspective there is no such problem:
the Majorana operator is what should be called as describing spin
$0$ particles, that is operators of the form $\partial_{0}+\mathcal{M}$
with 
\[
\mathcal{M}=Q^{\prime}\left(0,e\right)Q\left(0,\nabla\right)+mQ^{\prime}\left(0,f\right).
\]
Compared to the Majorana--Dirac operator, the spin $\frac{1}{2}$
case, if you will, is simply of the block-diagonal form (adding another
Majorana component yielding an increase in spin by $\frac{1}{2}$)
\[
\partial_{0}+\left(\begin{array}{cc}
\mathcal{M} & 0\\
0 & -\mathcal{M}
\end{array}\right).
\]
This would lead to a completely different conjecture for spin 1 particles,
namely the block-diagonal system
\[
\partial_{0}+\left(\begin{array}{ccc}
\mathcal{M} & 0 & 0\\
0 & -\mathcal{M} & 0\\
0 & 0 & \mathcal{M}
\end{array}\right).
\]
Since, as we have seen, we may even do away with the sign change in
the blocks, we could alternatively consider, with the abbreviation
\[
\diag\left(\mathcal{M},\ldots,\mathcal{M}\right)\coloneqq\left(\begin{array}{cccc}
\mathcal{M} & 0 & \cdots & 0\\
0 & \ddots & \ddots & \vdots\\
\vdots & \ddots & \ddots & 0\\
0 & \cdots & 0 & \mathcal{M}
\end{array}\right),
\]
 the block diagonal operator matrix
\[
\partial_{0}+\diag\left(\mathcal{M},\ldots,\mathcal{M}\right)=\diag\left(\partial_{0}+\mathcal{M},\ldots,\partial_{0}+\mathcal{M}\right),
\]
with repeated Majorana operators in the diagonal. This would make
the spin number just a counter of the number of blocks in the diagonal
(as it also turns out to be in the more involved standard construction
for higher spin numbers):
\[
\ell=\frac{b-1}{2}\in\frac{1}{2}\left[\mathbb{N}\right].
\]
This could be considered a cleanup of Majorana's original idea to
utilise his alternative factorisation of the Klein--Gordon operator
to have a simple mechanism for higher spin particles. Since he based
his proposal on the spin $\frac{1}{2}$ case, he developed this idea
for spin $\frac{2k+1}{2},k\in\mathbb{N}$. His attempt to consider
all these spin numbers in one step was likely misguided and lead to
his theory to become an easier target for dismissal. With the above
refinement of his idea to all spin numbers, we have given Majorana's
idea about higher spin the proper completion they deserve.


\begin{thebibliography}{10}
\bibitem{Bialynicki-Birula1996} I.~Bialynicki-Birula. \newblock
V Photon Wave Function. \newblock In E.~Wolf, editor, {\em Progress
in Optics}, volume~36, pages 245--294. Elsevier, 1996.

\bibitem{CabibboFerrari1962Monopoles} N.~Cabibbo and E.~Ferrari.
\newblock Quantum Electrodynamics with Dirac Monopoles. \newblock
{\em Il Nuovo Cimento}, 23:1147--1154, 1962.

\bibitem{Dirac1928QTElectron} P.~A.~M. Dirac. \newblock The Quantum
Theory of the Electron. \newblock {\em Proceedings of the Royal
Society of London. Series A}, 117(778):610--624, 1928.

\bibitem{Dirac1928QuantumTheoryElectronPartII} P.~A.~M. Dirac.
\newblock The Quantum Theory of the Electron. part ii. \newblock
{\em Proceedings of the Royal Society of London. Series A}, 118(779):351--361,
1928.

\bibitem{Dirac1931Monopole} P.~A.~M. Dirac. \newblock Quantised
singularities in the electromagnetic field. \newblock {\em Proceedings
of the Royal Society of London. Series A}, 133(821):60--72, 1931.

\bibitem{FeshbachVillars1958ElementaryRelativistic} Herman Feshbach
and Felix Villars. \newblock Elementary Relativistic Wave Mechanics
of Spin 0 and Spin 1/2 Particles. \newblock {\em Reviews of Modern
Physics}, 30(1):24--45, 1958.

\bibitem{IwanenkoLandau1928} D.~Iwanenko and L.~Landau. \newblock
Zur Theorie des magnetischen Elektrons. i. \newblock {\em Zeitschrift
f{ü}r Physik}, 48:340--348, 1928.

\bibitem{Kaehler1961DiracGleichung} Erich K{ä}hler. \newblock
Die Dirac-Gleichung. \newblock {\em Abhandlungen der Deutschen
Akademie der Wissenschaften zu Berlin, Klasse f{ü}r Mathematik,
Physik und Technik}, (1):1--38, 1961.

\bibitem{Kaehler1962InnererDifferentialkalkuelHamburg} Erich K{ä}hler.
\newblock Der innere Differentialkalk{ü}l. \newblock {\em Abhandlungen
aus dem Mathematischen Seminar der Universit{ä}t Hamburg}, 25:192--205,
1962.

\bibitem{Kemmer1939ParticleAspectMesonTheory} Nicholas Kemmer. \newblock
The Particle Aspect of Meson Theory. \newblock {\em Proceedings
of the Royal Society of London. Series A}, 173(952):91--116, 1939.

\bibitem{Majorana1937TeoriaSimmetrica} Ettore Majorana. \newblock
Teoria Simmetrica dell'Elettrone e del Positrone. \newblock {\em
Il Nuovo Cimento}, 14:171--184, 1937.

\bibitem{Marsch2011} E.~Marsch. \newblock The Two-Component Majorana
Equation - \emph{novel derivations and known symmetries}. \newblock
{\em Journal of Modern Physics}, 2(10):1109--1114, 2011.

\bibitem{Marsch2012} E.~Marsch. \newblock On the Majorana Equation:
Relations between its Complex Two-Component and Real Four-Component
Eigenfunctions. \newblock {\em International Scholarly Research
Notices}, 2012:1--17, 2012.

\bibitem{Pi84} R.~Picard. \newblock On the Low Frequency Asymptotics
in Electromagnetic Theory. \newblock {\em J. Reine Angew. Math.},
354:50--73, 1984.

\bibitem{Picard2009} R.~Picard. \newblock A Structural Observation
for Linear Material Laws in Classical Mathematical Physics. \newblock
{\em Math. Methods Appl. Sci.}, 32(14):1768--1803, 2009.

\bibitem{PicardMcGhee2011} R.~Picard and D.~McGhee. \newblock
{\em Partial Differential Equations},\newblock A Unified Hilbert
Space Approach. Volume~55 of {\em De Gruyter Expositions in Mathematics}.
\newblock Walter de Gruyter GmbH \& Co. KG, Berlin, 2011. 

\bibitem{Primer} R.~Picard, D.~McGhee, S.~Trostorff, and M.~Waurick.
\newblock {\em A Primer for a Secret Shortcut to {PDE}s of Mathematical
Physics}. \newblock Frontiers in Mathematics. Birkhäuser/Springer,
Cham, {[}2020{]} ©2020.

\bibitem{PT2024} R.~Picard and S.~Trostorff. \newblock M-Accretive
Realisations of Skew-Symmetric Operators. \newblock {\em J. Operator
Theory}, 92(1):3--24, 2024.

\bibitem{PTW2017} R.~Picard, S.~Trostorff, and M.~Waurick. \newblock
On a Connection between the {M}axwell System, the Extended {M}axwell
System, the {D}irac Operator and Gravito-Electromagnetism. \newblock
{\em Math. Methods Appl. Sci.}, 40(2):415--434, 2017.

\bibitem{Silberstein1907} L.~Silberstein. \newblock Elektromagnetische
Grundgleichungen in bivektorieller Behandlung. \newblock {\em Annalen
der Physik}, 327(3):579--586, 1907.

\bibitem{TV07} M.~Taskinen and S.~V{ä}nsk{ä}. \newblock Current
and Charge Integral Equation Formulations and {Picard}'s Extended
{Maxwell} System. \newblock {\em IEEE Trans. Antennas Propag.},
55(12):3495--3503, 2007.

\end{thebibliography}
\end{document}